\definecolor{myurlcolor}{rgb}{0,0,0.7}
\newtheorem{theorem}{Theorem}
\newtheorem{lemma}{Lemma}
\newcommand{\SPAN}{\mathrm{span}}
\newcommand{\cptp}{\mathcal{E}}
\newcommand{\iden}{\mathbbm{1}}
\newcommand{\diag}{\mathrm{diag}}
\newcommand{\cccone}{\mathcal C_{\mathrm{CCTIO}}}
\newcommand{\tiocone}{\mathcal C_{\mathrm{TIO}}}
\newcommand{\tio}{\mathrm{TIO}}
\begin{document}

\title{Amplifying asymmetry with correlating catalysts}

\author{Feng Ding}

\author{Xueyuan Hu}
\email{xyhu@sdu.edu.cn}
\affiliation{School of Information Science and Engineering, Shandong University, Qingdao 266237, China}

\author{Heng Fan}
\affiliation{Institute of Physics, Chinese Academy of Sciences, Beijing 100190, China}
\affiliation{CAS Center for Excellence in Topological Quantum Computation, University of Chinese Academy of Sciences, Beijing 100190, China}

\date{\today}

\begin{abstract}
We investigate the fundamental constraint on amplifying the asymmetry in quantum states with correlating catalysts. 
Here a correlating catalyst is a finite-dimensional auxiliary, which exactly preserves its reduced state while allowed to become correlated with the quantum system.
Interestingly, we prove that under translationally invariant operations, uncorrelating catalysts in pure states are useless in any state transformation, while with a correlating catalyst , one can extend the set of accessible states from an initially asymmetric state.
Moreover, we show that the power of a catalyst increases with its dimension, and further, with a large enough catalyst, a qubit state with an arbitrarily small amount of asymmetry can be converted to any mixed qubit state.
In doing so, we build a bridge between two important results concerning the restrictions on coherence conversion, the no-broadcasting theorem and the catalytic coherence.
Our results may also apply to the constraints on coherence evolution in quantum thermodynamics and to the distribution of timing information between quantum clocks.
\end{abstract}

\maketitle

\section{Introduction}
Finding out whether a quantum state can be converted to another under a set of restricted operations is a problem originating from the entanglement theory \cite{PhysRevLett.83.436}, and has recently been studied in a variety of resource theories \cite{Horodecki2013thermalmajor,PhysRevLett.117.030401,RevModPhys.91.025001,Marvian2014}. Moreover, in resource theories such as entanglement \cite{PhysRevLett.83.3566}, athermality \cite{Brandao3275}, coherence \cite{PhysRevA.93.042326,PhysRevA.100.042323,Vaccaro_2018,PhysRevLett.119.140402}, and quantum randomness \cite{PhysRevX.8.041016}, catalysts are employed to enhance the ability of state conversion. 
A catalyst is an ancilla which interacts with the system and then returns to the exact original state. Conventionally, the catalyst is required to be uncorrelated with the system after the process \cite{RevModPhys.91.025001,Brandao3275,Turgut,PhysRevLett.121.190504}. 
Nevertheless, recent studies suggest that the creation of correlations may greatly extend the set of accessible states \cite{PhysRevX.8.041051,PhysRevLett.115.150402,nc2019_corr_thermo,PhysRevE.99.042135,PhysRevLett.122.210402}. 
In particular, in the resource theory of athermality, if this uncorrelation requirement is lifted, the catalyst becomes more powerful.
Namely, it enables state conversions which are not achievable using an uncorrelating catalyst \cite{PhysRevX.8.041051}. Moreover, in resource theories governed by majorization (such as athermality), any resourceful state, supplied with sufficiently many copies, can be used as a catalyst for any allowed transformation \cite{lipkabartosik2020states}.

The superposition between different eigenstates of a conserved observable is a valuable resource of asymmetry in many tasks such as quantum metrology \cite{PhysRevLett.96.010401,PhysRevA.93.042107}, quantum clocks \cite{PhysRevLett.82.2207}, and quantum thermodynamics \cite{PhysRevLett.120.150602}. 
Also, the constraints on the asymmetry dynamics under covariant operations impose essential limitations on thermodynamics processes beyond free energy \cite{Lostaglio15nc}, and on quantum speed limits \cite{PhysRevX.6.021031}.
Surprisingly, evidence has been uncovered that catalysts might be useless in the resource theory of asymmetry. The no-catalysis theorem, as proved in Ref. \cite{Marvian_2013}, states that if a pure state cannot be converted to another pure state using operations which are symmetric under a compact Lie group, then any finite-dimensional catalyst in a pure state can not enable this conversion. 
Furthermore, by the no-broadcasting theorem of asymmetry \cite{PhysRevLett.123.020403,PhysRevLett.123.020404}, the creation of asymmetry in an initially symmetric state is impossible even with a correlating finite-dimensional catalyst, in comparison to the protocol of catalytic coherence \cite{PhysRevLett.113.150402}, where an arbitrary amount of coherence between energy levels can be created by interacting the system with an infinite-dimensional catalyst. In order to explore the crossover between the no-broadcasting theorem and the catalytic coherence, we ask the following question: To what extent can a finite-dimensional catalyst enlarge the set of accessible states under symmetric transformations?

Because the creation of correlations between the system and the catalyst may ease the state transformation, and this correlation does not affect the power of the catalyst in other state transformations, we allow the catalyst to become correlated with the system on which it acts. 
In this paper, we impose three restrictions on the catalytic system: (1) it is finite-dimensional; (2) its reduced state is exactly identical before and after the state conversion; and (3) it is uncorrelated with the system \emph{before} the state transformation. 
Here we first prove a general result that a catalyst can extend the set of accessible states only if it is in a mixed state,
which generalizes the no-catalysis theorem in Ref. \cite{Marvian_2013}. 
Then we show that, in contrast to other resource theories, there is no bound on amplifying the asymmetry with correlating catalysts. 

That is, any qubit state with an arbitrarily small amount of asymmetry can be converted to a state arbitrarily close to the state with maximal asymmetry, as long as the dimension of the catalyst is large enough. 
The applications of our results to the constraints on catalytic coherence evolution in quantum thermodynamics, and to the distribution of timing information, are also discussed.

\section{Notions}
In the resource theory of time-translation asymmetry, the free states are the symmetric states $\rho_\mathrm{sym}$, which are invariant during the evolution under Hamiltonian $H$, i.e., $e^{-iHt}\rho_\mathrm{sym}e^{iHt}=\rho_\mathrm{sym}$, $\forall t$ (equivalently, $[\rho_\mathrm{sym},H]=0$). 
The free operations are the translationally invariant operations (TIOs), or covariant operations, which are defined as completely-positive and trace-preserving maps $\cptp$ satisfying $\cptp\left(e^{-iHt}\rho e^{iHt}\right)=e^{-iHt}\cptp(\rho) e^{iHt},\ \forall \rho,t$. The set of states that can be converted to from a given state $\rho$ under TIOs is called the TIO cone of $\rho$, labeled $\tiocone(\rho)\equiv\{\rho':\rho'=\cptp(\rho),\cptp\in\tio\}$.

The correlating-catalytic TIO (CCTIO) in a system $S$ with Hamiltonian $H_S$ is implemented by coupling $S$ to a finite-dimensional auxiliary $C$ with Hamiltonian $H_C$ via a  global translationally invariant operation which preserves the reduced state of $C$. 
Here by ``correlating,'' we mean that the catalyst $C$ is initially uncorrelated with $S$ but allowed to become correlated with $S$ in the output.
Precisely, we say that a state $\rho$ can be transformed to $\rho'$ by CCTIO, if there exists a finite-dimensional auxiliary system in state $\sigma_C$, and a global TIO $\cptp$ satisfying $\cptp\left(e^{-i(H_S+H_C)t}\cdot e^{i(H_S+H_C)t}\right)=e^{-i(H_S+H_C)t}\cptp(\cdot) e^{i(H_S+H_C)t}$ such that
\begin{equation}
\cptp(\rho\otimes\sigma)=\rho'|\sigma.\label{eq:cctio}
\end{equation}
Here the label $\rho'|\sigma$ means a bipartite state of $S$ and $C$, whose reduced states are $\rho'$ on $S$ and $\sigma$ on $C$. Notably, the state of $C$ is identical before and after the action of $\cptp$. The set of states achievable under CCTIO from $\rho$ is called the CCTIO cone of $\rho$, and the auxiliary $C$ is called the correlating catalyst. When the dimension of $C$ is restricted to $d$, the CCTIO cone of $\rho$ is labeled $\cccone^{(d)}(\rho)$.

\section{Catalysts in pure states are useless}
When the catalyst is in a pure state $\sigma=\ket{\phi}\bra{\phi}$, the transformation in Eq. (\ref{eq:cctio}) reads
\begin{equation}
\cptp(\rho\otimes\ket{\phi}\bra{\phi})=\rho'\otimes\ket{\phi}\bra{\phi}.\label{eq:ctio}
\end{equation}
Because it is required that the state of the catalyst is exactly retained, the purity of $\sigma$ ensures that $S$ and $C$ are not correlated in the output. For an asymmetry monotone $I(\rho)$ which is additive on tensor products 
(such as the quantum Fisher information \cite{clock2003} and Wigner-Yanase skew information \cite{Marvian12}; see Appendix \ref{app:measure}), Eq. (\ref{eq:ctio}) implies that $I(\rho')+I(\ket{\phi}\bra{\phi})=I(\rho'\otimes\ket{\phi}\bra{\phi})\leq I(\rho\otimes\ket{\phi}\bra{\phi})=I(\rho)+I(\ket{\phi}\bra{\phi})$, and hence, $I(\rho')\leq I(\rho)$. This means that with a finite-dimensional pure catalytic state, the asymmetry monotone $I$ can never be increased. 

Yet, it is not as simple to see whether other asymmetry monotones, which are not additive on tensor products, are also monotonic under the catalyzed transformation in Eq. (\ref{eq:ctio}). 
In the following theorem, we show a stronger result. Namely, any catalyst in a pure state cannot enable state transformations which are not achievable by TIO. The proof of this theorem is given in Appendix \ref{app:th1}.

\begin{theorem}\label{th1}
If $\rho$ cannot be transformed to $\rho'$ under TIO, then the transformation $\rho\otimes\ket{\phi}\bra{\phi}\mapsto\rho'\otimes\ket{\phi}\bra{\phi}$ under TIO is also not achievable for any choice of pure state $\ket{\phi}$.
\end{theorem}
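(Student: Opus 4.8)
The plan is to prove the contrapositive in its operationally dual form: assuming the catalysed conversion $\rho\otimes\ket{\phi}\bra{\phi}\mapsto\rho'\otimes\ket{\phi}\bra{\phi}$ is realised by some global $\tio$ $\cptp$, I will exhibit a $\tio$ acting on $S$ alone that sends $\rho$ to $\rho'$, contradicting the hypothesis $\rho'\notin\tiocone(\rho)$. The engine of the argument is to lift the mixed-state problem to a pure-state one by purification, where the pure-state no-catalysis theorem of Ref.~\cite{Marvian_2013} applies directly.

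First I would dilate. Write $\cptp(\cdot)=\mathrm{Tr}_E[V\cdot V^\dagger]$ with a covariant Stinespring isometry $V:\mathcal{H}_{SC}\to\mathcal{H}_{SC}\otimes\mathcal{H}_E$ intertwining $e^{-i(H_S+H_C)t}$ with $e^{-i(H_S+H_C+H_E)t}$ for some environment $E$ with Hamiltonian $H_E$; such a covariant dilation exists because $\cptp$ is a $\tio$. Next, purify $\rho=\mathrm{Tr}_R\ket{\Psi}\bra{\Psi}$ on an inert reference $R$ (trivial Hamiltonian), so that the total input $\ket{\Psi}_{SR}\otimes\ket{\phi}_C$ is pure, and form $\ket{\Omega}_{SCER}=(V\otimes\iden_R)(\ket{\Psi}_{SR}\otimes\ket{\phi}_C)$.

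The decisive step is to read off that the catalyst factorises out. By Eq.~(\ref{eq:ctio}) the marginal of $\ket{\Omega}$ on $SC$ is $\rho'\otimes\ket{\phi}\bra{\phi}$, so the marginal on $C$ alone is the pure state $\ket{\phi}\bra{\phi}$; a global pure state with a pure marginal is a product, hence $\ket{\Omega}=\ket{\chi}_{SER}\otimes\ket{\phi}_C$ for some pure $\ket{\chi}$ with $\mathrm{Tr}_{ER}\ket{\chi}\bra{\chi}=\rho'$. Thus $V$ implements a covariant pure-state conversion $\ket{\Psi}_{SR}\otimes\ket{\phi}_C\mapsto\ket{\chi}_{SER}\otimes\ket{\phi}_C$ in which the finite-dimensional pure catalyst $\ket{\phi}$ appears as a common tensor factor on both sides. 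This is exactly the situation covered by the pure-state no-catalysis theorem: since the underlying symmetry is the (compact) one-parameter group generated by the total Hamiltonian, Ref.~\cite{Marvian_2013} lets me delete the catalyst and obtain a covariant operation sending $\ket{\Psi}_{SR}\mapsto\ket{\chi}_{SER}$. Tracing out $E$ and $R$ then yields a $\tio$ on $S$ taking $\rho$ to $\rho'$, the desired contradiction.

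The step I expect to be the main obstacle is the reduction itself rather than any single computation: one must verify that the catalyst genuinely decouples as a \emph{pure} tensor factor---this is where the exact, uncorrelated return guaranteed by purity is essential, and it is precisely what fails for a mixed catalyst---and that the purified conversion $\ket{\Psi}_{SR}\mapsto\ket{\chi}_{SER}$, which lives on enlarged spaces carrying the free environment $E$ and the inert reference $R$, really does meet the hypotheses of the pure-state no-catalysis theorem for the relevant compact group. Once the problem is correctly recast in this purified pure-state form, the asymmetry content of $\ket{\phi}$ cancels between input and output and the finite pure catalyst is powerless, generalising the no-catalysis theorem from pure $\rho,\rho'$ to arbitrary mixed states.
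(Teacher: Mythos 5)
Your argument is fine up to and including the factorization step: the covariant Stinespring dilation exists, and purity of the catalyst's marginal does force $\ket{\Omega}=\ket{\chi}_{SER}\otimes\ket{\phi}_C$ (this part parallels what the paper does in its Lemma \ref{lemma:mix} and in the proof of the theorem). The gap is your final sentence. The no-catalysis theorem of Ref.~\cite{Marvian_2013}, applied at the purified level, gives a covariant operation acting jointly on $S$ \emph{and the purifying reference} $R$, sending $\ket{\Psi}_{SR}$ to $\ket{\chi}_{SER}$. Tracing out $E$ and $R$ then yields a covariant channel whose input is the purification $\ket{\Psi}\bra{\Psi}_{SR}$, not the state $\rho$ of $S$; this is not a TIO on $S$, and the inference ``purifications over an inert reference are covariantly interconvertible $\Rightarrow$ $\rho\mapsto\rho'$ under TIO'' is false. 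Concrete counterexample: let $H_S=\frac12\sigma^z$, $H_R=0$, $\rho=\frac12\iden$ (symmetric) with purification $\ket{\Psi}_{SR}=(\ket{0}\ket{0_R}+\ket{1}\ket{1_R})/\sqrt{2}$, and $\rho'=\ket{+}\bra{+}$ with extension $\ket{\chi}_{SR}=\ket{+}\ket{0_R}$. Both pure states have characteristic function $\cos(t/2)$, and the unitary $W=\ket{0}\bra{0}\otimes\iden_R+\ket{1}\bra{1}\otimes W_1$ (with $W_1\ket{1_R}=\ket{0_R}$) commutes with $H_S\otimes\iden_R$ and maps $\ket{\Psi}$ exactly to $\ket{\chi}$; yet $\frac12\iden\mapsto\ket{+}\bra{+}$ is impossible under TIO, since covariant channels cannot create asymmetry. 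The root cause is that purification over an inert reference does not preserve asymmetry content: $\ket{\Psi}_{SR}$ carries coherence between distinct eigenvalues of $H_S\otimes\iden_R$ even though $\rho$ is symmetric, so convertibility at the purified level is strictly weaker than TIO convertibility of the marginals.

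This is exactly the difficulty the paper's proof is engineered to overcome, and it does so without purifying. It decomposes $\rho=\sum_j p_j\ket{\psi_j}\bra{\psi_j}$ as an ensemble, uses the purity of the catalyst to conclude $U(\ket{\psi_j}\otimes\ket{\phi})=\ket{\psi'_j}\otimes\ket{\phi}$ for every $j$ (your factorization step, applied member by member), and then proves the hard statement: a \emph{single} covariant unitary $V$ on the system satisfies $V\ket{\psi_j}=\ket{\psi'_j}$ for all $j$ simultaneously, whence $V\rho V^\dagger=\rho'$. That common-unitary statement is Lemma \ref{lemma:set}, proved by induction on the ensemble size via the inner-product rigidity result of Lemma \ref{lemma:pre} within each energy eigenspace. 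Applying the pure-state no-catalysis theorem separately to each $j$ only produces $j$-dependent unitaries $V_j$, which cannot be assembled into one channel acting on the unknown mixture; your purification was intended to bypass this ensemble problem, but it replaces it with an invalid descent step. Any repair of your route would need to reintroduce something equivalent to Lemma \ref{lemma:set}.
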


This theorem is our first main result. It generalizes the no-catalysis theorem \cite{Marvian_2013}, in which the states $\rho$ and $\rho'$ were restricted to pure states, and indicates that catalysts in pure states are useless in any state transformation under covariant operations.

\section{Extending the set of accessible states with correlating catalysts}
When the catalyst is in a mixed state $\sigma$, it may become correlated with the system after the transformation. 
Because there are correlated states $\rho'|\sigma$ such that $I(\rho'|\sigma)<I(\rho')+I(\sigma)$ \cite{PhysRevLett.123.020403,PhysRevLett.123.020404,Takagi2019}, it is possible that $I(\rho')>I(\rho)$, i.e., a correlating catalyst may enable state transformations which are not achievable under TIOs.

In Ref. \cite{PhysRevE.99.042135}, a stationary machine was designed to control and amplify the energetic coherence in quantum systems. 
It gives evidence that, with the help of a correlating catalyst (the stationary machine), one can achieve state transformations which are not realizable via TIOs. 
Nevertheless, it is not quite straightforward to see whether the state of the catalyst (machine) is \emph{exactly} identical after each round, due to the approximations in deriving the master equations.

Here we give an analytic example, which shows that a global TIO acting on $S$ and $C$ can transform the state of $S$ to a state not achievable under TIOs, while strictly preserving the reduced state of $C$. 
Consider a system qubit with Hamiltonian $H_S=\frac{\Delta}{2}\sigma^z$ and a catalyst qubit with Hamiltonian $H_C=\frac{\Delta}{2}\sigma^z$, where $\sigma^{x,y,z}$ denote the Pauli matrices and $\Delta>0$ is the energy gap. 
Initially, the two-qubit state of $SC$ reads $\rho(\eta)\otimes\sigma^{\uparrow}(\eta)$, where
\begin{eqnarray}
\label{eq:rhoeta}\rho(\eta)&=&\frac12(\iden+\eta\sigma^x),\ 0<\eta<1,\\
\label{eq:cueta}\sigma^\uparrow(\eta)&=&\frac12\iden+\frac{\sqrt3\eta}{4}\sigma^x+\frac{4-\eta^2}{12}\sigma^z,
\end{eqnarray}
are states of $S$ and $C$ respectively.
After the application of the global covariant operation $\cptp^\uparrow(\cdot)=K_0(\cdot)K_0^\dagger+K_1(\cdot)K_1^\dagger$ with
\begin{equation}
K_0=\begin{pmatrix}
1& 0& 0&0\\
0 & \frac14 & \frac{\sqrt3}{4} & 0\\
0 & \frac{\sqrt3}{4} & \frac34 &0 \\
0 &0 &0 & 1
\end{pmatrix},
K_1=\begin{pmatrix}
0& -\frac{\sqrt3}{2} & \frac12 & 0\\
0 & 0 &0  & 0\\
0 & 0 &0  & 0\\
 0& 0& 0& 0
\end{pmatrix},\label{eq:kraus}
\end{equation}
we obtain a correlated two-qubit state, whose reduced states are 
\begin{equation}
    \rho^\uparrow(\eta)=\frac12\iden+\frac{25\eta-\eta^3}{48}\sigma^x+\frac{1-\eta^2}{6}\sigma^z
\end{equation}
on $S$ and $\sigma^{\uparrow}(\eta)$ as in Eq. (\ref{eq:cueta}) on $C$.

The TIO cone of $\rho(\eta)$ reads $\tiocone[\rho(\eta)]=\{\xi|\xi=\frac12[\iden+r(\cos\phi\sigma^x+\sin\phi\sigma^y)+r_z\sigma^z],r_z\in[-1,1],\phi\in[0,2\pi),0\leq r\leq\eta\sqrt{1-|r_z|}\}$ 
(see Appendix \ref{app:cone} for details).
Hence for any state $\xi\in \tiocone[\rho(\eta)]$, it holds that $\Tr(\xi\sigma^x)\leq\eta$. Because $\Tr[\rho^\uparrow(\eta)\sigma^x]>\eta$, we have $\rho^\uparrow(\eta)\notin\tiocone[\rho(\eta)]$. 
This means that the state transformation from $\rho(\eta)$ to $\rho^\uparrow(\eta)$, which is not achievable by TIO, can be enabled by employing a correlated catalytic qubit. 
Notably, the reduced state $\sigma^\uparrow(\eta)$ of $C$ is \emph{exactly} identical before and after the action of the global TIO $\cptp^\uparrow$, which excludes the phenomenon of embezzlement.

\begin{figure}[htbp]
        \centering
        \includegraphics[width=0.45\textwidth]{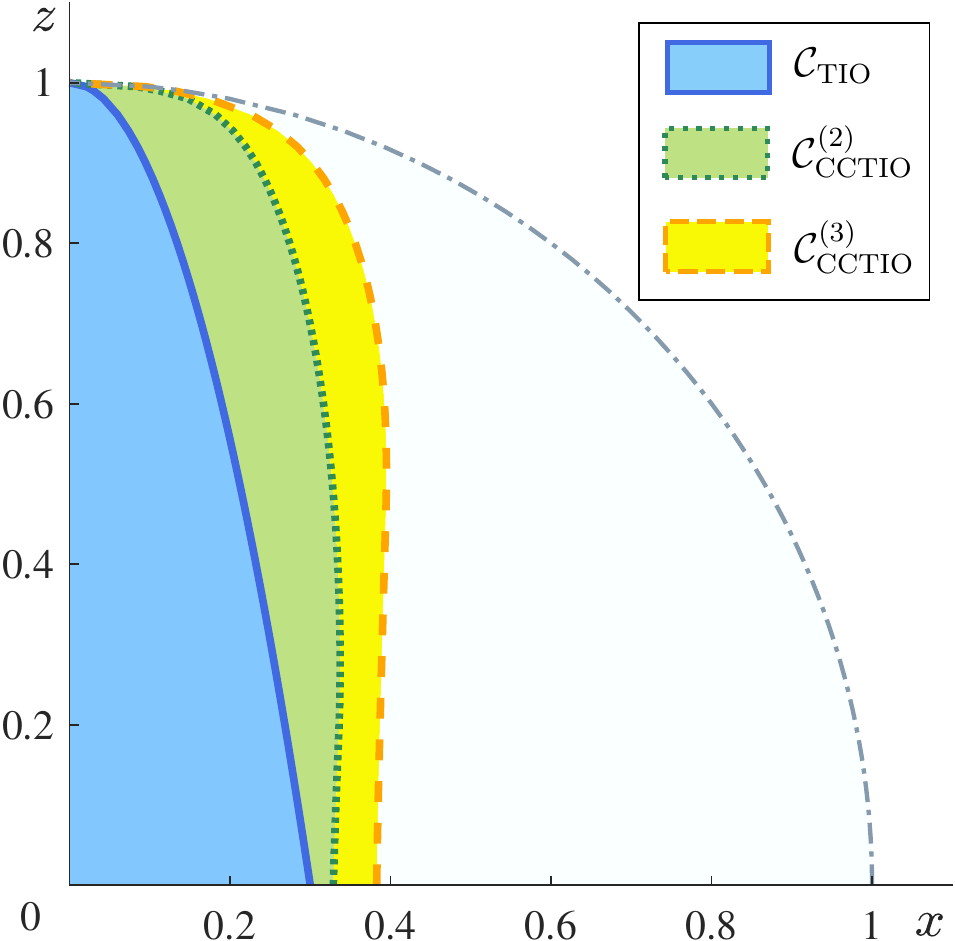} 
        \caption{Comparison of the TIO cone, and the CCTIO cone with two-dimensional and three-dimensional catalysts, of a qubit state. The initial state is $\rho(\eta)$ as in Eq. (\ref{eq:rhoeta}) with $\eta=0.3$.} 
        \label{fig:CCTIO}
\end{figure}

Next we analyze whether the power of catalysis is affected by the dimension $d$ of $C$. 
Here we focus on the whole set of accessible states from a given initial state, instead of the amplification of some asymmetry measure, in order to avoid the dependence on the choice of asymmetry measure.
Specifically, we numerically calculate the CCTIO cone with $d=2,3$ of state $\rho(\eta)$ 
(see Appendix \ref{app:numerical} for details).
In the Bloch presentation, the basic structure of the CCTIO cone of a qubit state is that it is rotationally symmetric about the $z$ axis. 
For state $\rho(\eta)$, it is also symmetric about the $xy$ plane 
(see Appendix \ref{app:cone} for details).
In Fig. \ref{fig:CCTIO}, we plot $\tiocone[\rho(\eta)]$, $\cccone^{(2)}[\rho(\eta)]$, and $\cccone^{(3)}[\rho(\eta)]$ within the $xz$ plane with $x\geq0$ and $z\geq0$. 
Clearly, $\cccone^{(3)}$ is strictly larger than $\cccone^{(2)}$. 
This means that, with a correlating catalyst of higher dimension, more state conversions can be realized.

\section{Unbounded amplification of asymmetry}
Here we propose a protocol to show that, any input qubit state $\rho$ satisfying $[\rho,H_S]\neq0$ can be transformed to a state arbitrarily close to $\rho^+=\frac12(\iden+\sigma^x)$ via CCTIO, given that the dimension of the catalyst is large enough (but still finite). 
Note that $\rho^+$ is the qubit state with the maximum amount of asymmetry, when a variety of asymmetry measures, including those based on the skew information, relative entropy, robustness, etc., are employed 
(see Appendix \ref{app:measure} for details).

This protocol is a generalization of the analytic example in the last section. 
Here the catalyst $C$ consists of $N$ particles, labeled $C_1,\dots,C_i,\dots,C_N$, each of which contains two qubits, $C_{i1}$ and $C_{i2}$. 
The Hamiltonian of the $i$th particle $C_i$ reads $H_{C_i}=H_{C_{i1}}+H_{C_{i2}}$ with $H_{C_{i1}}=\frac{\Delta}{2}\sigma^z$ and  $H_{C_{i2}}=\frac{\Delta'}{2}\sigma^z$.
The reduced state of $C_i$ is set in the form
\begin{equation}\label{eq:sigma_eta_i}
\sigma(\eta_i)=\frac12\sigma^\uparrow(\eta_i)\otimes\ket{\uparrow}\bra{\uparrow}+\frac12\sigma^\downarrow(\eta_i)\otimes\ket{\downarrow}\bra{\downarrow},
\end{equation}
where $\sigma^\uparrow(\eta_i)$ as in Eq. (\ref{eq:cueta}) and $\sigma^\downarrow(\eta_i)\equiv\sigma^x\sigma^\uparrow(\eta_i)\sigma^x$ are states of qubit $C_{i1}$, the parameter $\eta_i$ depends on the state of the system qubit, and $\ket{\uparrow}$ and $\ket{\downarrow}$ are energy eigenstates of qubit $C_{i2}$.

To start with, we convert the initial asymmetric state $\rho$ of the system qubit to $\rho(\eta_1)$ with $\eta_1>0$ via a local TIO $\cptp_0$. Then the system qubit is coupled to each of the $N$ particles via TIO consequently. In each turn, the operation acting on $SC_i$ reads
\begin{equation}
\cptp_i=\cptp^{\uparrow}\otimes{\Pi^\uparrow}+ \cptp^{\downarrow}\otimes{\Pi^\downarrow},
\end{equation}
where the two-qubit TIOs $\cptp^{\uparrow}$ and $\cptp^{\downarrow}$ are applied to $SC_{i1}$, and $\Pi^\uparrow$ and $\Pi^\downarrow$ are projectors to the energy eigenstates of qubit $C_{i2}$. 
Here $\cptp^\uparrow$ is in the form of Eq. (\ref{eq:kraus}), and $\cptp^\downarrow\equiv \mathcal U^x\circ\cptp^\uparrow\circ\mathcal U^x$, with $\mathcal U^x(\cdot)=\sigma^x\otimes\sigma^x(\cdot)\sigma^x\otimes\sigma^x$ and $\circ$ denoting the composition of quantum operations.
It can be checked by definition that $\cptp_i$ is a TIO on the composite system $SC_{i}$.
Effectively, we have
\begin{eqnarray}
&&\cptp_i[\rho(\eta_i)\otimes\sigma(\eta_i)]\nonumber\\
&=&\frac12\rho^\uparrow(\eta_i)|\sigma^\uparrow(\eta_i)\otimes\ket{\uparrow}\bra{\uparrow}+\frac12\rho^\downarrow(\eta_i)|\sigma^\downarrow(\eta_i)\otimes\ket{\downarrow}\bra{\downarrow}\nonumber\\
&=&\rho(\eta_{i+1})|\sigma(\eta_i),\label{eq:cptpi}
\end{eqnarray}
where $\rho^\downarrow(\eta)\equiv\sigma^x\rho^\uparrow(\eta)\sigma^x$, and $\eta_{i+1}=\eta_i+\frac{1}{24}\eta_i(1-\eta_i^2)$ for $i=1,\dots,N$.
This means that after the action of $\cptp_i$, the parameter $\eta$ in the state of $S$ is increased by $\Delta\eta_i\equiv\eta_{i+1}-\eta_{i}=\frac{1}{24}\eta_i(1-\eta_i^2)$,
which is strictly positive with $0<\eta_i<1$.
Therefore, as long as the initial state $\rho$ is asymmetric (which ensures $\eta_1>0$), we can achieve the state $\rho(\eta_{N+1})$ with $\eta_{N+1}\rightarrow1$ for finite $N$.
In other words, the state conversion from a state with an arbitrarily small amount of asymmetry to a state arbitrarily close to $\rho^+$ can be achieved by our protocol, if the catalyst is large enough (but still finite).

Now let us illustrate that the operation involved in our protocol is CCTIO. 
Because $\cptp_i$ are TIOs, the composite operation $\cptp\equiv\cptp_N\circ\cdots\circ\cptp_1\circ\cptp_0$ is also a TIO. 
In the following, we construct the $N$-particle catalytic state $\sigma$ such that it is not affected by $\cptp$, i.e., $\Tr_S[\cptp(\rho\otimes\sigma)]=\sigma$.
First, the reduced state of particle $C_i$ is not affected either by $\cptp_i$ due to Eq. (\ref{eq:cptpi}) nor by $\cptp_{i'}$ with $i'\neq i$, so we have $\Tr_{\backslash i}[\cptp(\rho\otimes\Sigma)]=\sigma(\eta_i),\ \forall i$, for any $N$-particle state $\Sigma$ satisfying $\Tr_{\backslash i}(\Sigma)=\sigma(\eta_i),\forall i$. 
Here $\Tr_{\backslash i}$ means a partial trace on all systems except $C_i$. 
Then let $\sigma^{(1)}=\sigma\otimes\cdots\otimes\sigma(\eta_N)$ and $\sigma^{(j+1)}=\Tr_{S}[\cptp(\rho\otimes\sigma^{(j)})]$, and we have $\Tr_{\backslash i}(\sigma^{(j)})=\sigma(\eta_i),\forall i,j$. Now we define $\sigma=\lim_{n\rightarrow\infty}\frac1n\sum_{j=1}^n\sigma^{(j)}$, and then
\begin{eqnarray}
&&\Tr_S[\cptp(\rho\otimes\sigma)]=\lim_{n\rightarrow\infty}\frac1n\sum_{j=0}^n\Tr_S[\cptp(\rho\otimes\sigma^{(j)})]\nonumber\\
&=&\sigma+\lim_{n\rightarrow\infty}\frac1n[\sigma^{(n+1)}-\sigma^{(1)}]=\sigma.
\end{eqnarray}
In total, we have $\cptp(\rho\otimes\sigma)=\rho(\eta_{N+1})|\sigma$. This means that $\rho$ is converted to $\rho(\eta_{N+1})$ via CCTIO.

The condition $[\rho,H_S]\neq0$ on the input state $\rho$ is essential in our protocol. 
If it is satisfied, we have $\rho(\eta_1)\in\tiocone(\rho)$ with $\eta_1>0$ 
(see Appendix \ref{app:cone} for details).
Otherwise, $\eta_1=0$, and hence, $\Delta\eta_i=0,\forall i$. 
This means that, if we start from a symmetric state, it remains symmetric for arbitrarily large $N$. 
This is compatible with the no-broadcasting theorem of asymmetry \cite{PhysRevLett.123.020403,PhysRevLett.123.020404}. Namely, by employing a correlating catalyst, one can amplify the asymmetry in an asymmetric state, instead of creating asymmetry in a symmetric state.

Here we mention that, in general, it is impossible to reach $\rho^+$ \emph{exactly} with any finite-dimensional catalyst. Suppose there is a finite-dimensional catalytic state $\sigma_C$ and a global TIO $\cptp$ such that $\cptp(\rho\otimes\sigma_C)=\rho^+|\sigma_C$. Because $\rho^+$ is a pure state, the bipartite state in the output is not correlated, i.e., $\rho^+|\sigma_C=\rho^+\otimes\sigma_C$. It follows that $I(\rho)\geq I(\rho^+)$, which holds only if $\rho$ is on the equator of the Bloch sphere. Therefore, it is impossible to transform any state (other than the ones equivalent to $\rho^+$ by symmetric unitaries) exactly to $\rho^+$ by any CCTIO process.

 Nevertheless, because $\tiocone(\rho^+)$ does not include all the qubit states, it is not straightforward to see whether one can transform $\rho$ to any qubit state approximately via CCTIO. In the following, we generalize the above protocol and give an affirmative answer to this question.

\begin{theorem}\label{th2}
For any pair of qubit states $\rho$ and $\rho'$, the state conversion from $\rho_\epsilon$ to $\rho'_{\epsilon'}$ is achievable under CCTIO.
\end{theorem}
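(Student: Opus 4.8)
The plan is to reduce Theorem \ref{th2} to a single clean claim and then build the required operation by generalizing the amplification protocol of the previous section. The reduction is this: it suffices to show that \emph{any} asymmetric qubit state $\rho$ (with $[\rho,H_S]\neq0$) can be converted by CCTIO to \emph{any} mixed qubit state $\rho'$ (one lying strictly inside the Bloch ball). The statement then follows immediately, since for any $\rho$ there is a state $\rho_\epsilon$ arbitrarily close to it with $[\rho_\epsilon,H_S]\neq0$ (tilt the Bloch vector off the $z$ axis), and for any $\rho'$ there is a mixed state $\rho'_{\epsilon'}$ arbitrarily close to it (shrink its Bloch vector inward); applying the core claim to the pair $(\rho_\epsilon,\rho'_{\epsilon'})$ gives the result.

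To establish the core claim I would first exploit the rotational symmetry of the CCTIO cone about the $z$ axis, noted earlier, to assume without loss of generality that the target has Bloch vector $(r,0,r_z)$ in the $xz$ plane with $r\geq0$ and $r^2+r_z^2<1$. The protocol then proceeds in three stages. First, a TIO sends $\rho$ to the state $\tfrac12(\iden+r_0\sigma^x+r_z\sigma^z)$ carrying the target population $r_z$ and a small transverse coherence $r_0>0$; this is reachable because $\rho(\eta_1)\in\tiocone(\rho)$ with $\eta_1>0$ (precisely since $\rho$ is asymmetric), and the tilted state lies in $\tiocone[\rho(\eta_1)]$ for $r_0$ small enough. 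Second, I would iterate a \emph{fixed-population} amplification: a sequence of two-qubit covariant operations, each coupling $S$ to one catalyst particle $C_i$ in the controlled form $\cptp^\uparrow\otimes\Pi^\uparrow+\cptp^\downarrow\otimes\Pi^\downarrow$ as before, designed so that the transverse coherence $r$ increases by a strictly positive increment while $r_z$ is held fixed and the reduced state of $C_i$ is preserved. Driving $r$ toward the pure-state boundary $\sqrt{1-r_z^2}$, the increments behave like the $\Delta\eta_i$ of the previous protocol and bring $r$ arbitrarily close to any admissible target in finitely many steps. Third, a final dephasing TIO, which only contracts the transverse component and leaves $r_z$ untouched, sets the coherence exactly to the target $r$.

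The catalyst-preservation requirement is handled exactly as in the previous section: the reduced state of each $C_i$ is invariant under its own coupling and untouched by the others, so defining the global catalyst by the averaged state $\sigma=\lim_{n\to\infty}\frac1n\sum_{j=1}^n\sigma^{(j)}$ of the iterated reduced states forces $\Tr_S[\cptp(\rho\otimes\sigma)]=\sigma$, making the composite operation a genuine CCTIO.

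The main obstacle I anticipate is constructing the fixed-population amplifying map for a \emph{tilted} state, i.e.\ at nonzero $r_z$. The explicit operation of Eq.~(\ref{eq:kraus}) was tailored to the equator ($r_z=0$), where $\rho(\eta)$ and its partner catalyst $\sigma^\uparrow(\eta)$ take a convenient symmetric form. For general $r_z$ one must exhibit catalyst states $\sigma^\uparrow,\sigma^\downarrow$ and a covariant two-qubit channel that simultaneously (i) form a completely positive, trace-preserving, time-translation-covariant map, (ii) strictly increase $r$ with an increment bounded away from zero except near the boundary, and (iii) keep $r_z$ and the catalyst's marginal exactly fixed. Crucially, this amplification must reach $r$ up to $\sqrt{1-r_z^2}$, since merely amplifying on the equator and then applying a TIO cannot reach targets with $\sqrt{1-|r_z|}<r<\sqrt{1-r_z^2}$, which are mixed yet lie outside $\tiocone(\rho^+)$. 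Verifying that a single family of covariant Kraus operators meets all three conditions for every admissible $r_z$ is the technical heart of the argument; the rest follows the template already established.
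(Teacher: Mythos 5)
Your overall strategy coincides with the paper's: the same reduction (perturb $\rho$ to an asymmetric state and $\rho'$ to a mixed state, then prove that any asymmetric qubit state reaches any mixed qubit state under CCTIO), the same three-stage protocol (prepare the target population $z'$ with a small transverse coherence via a local TIO, iterate a controlled two-qubit amplification at fixed $z'$, finish with a dephasing TIO), and the same Ces\`aro-average construction of the global catalyst state. Your observation that equatorial amplification alone cannot suffice --- because targets with $\sqrt{1-|z'|}<\eta'<\sqrt{1-z'^2}$ are mixed yet lie outside $\tiocone(\rho^+)$ --- is also correct and is exactly the paper's motivation for Theorem \ref{th2}. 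However, there is a genuine gap, and you name it yourself: you never construct the fixed-population amplifying step, i.e., a covariant two-qubit channel together with catalyst states that strictly increase the transverse component while leaving $r_z=z'$ and the catalyst marginal \emph{exactly} invariant, for every admissible $z'$. This is not a routine verification that ``follows the template'': it is the entire content of the theorem beyond the equatorial case. The three conditions you list are mutually constraining --- exact catalyst preservation together with covariance and complete positivity is precisely what made Eq.~(\ref{eq:kraus}) delicate --- so their simultaneous satisfiability at $z'\neq0$ cannot be taken on faith, and without it your argument proves nothing beyond what the original protocol already gave.

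The paper closes this gap in a way you did not anticipate: no new Kraus operators are needed. It keeps the very operations $\cptp^\uparrow$ of Eq.~(\ref{eq:kraus}) and $\cptp^\downarrow=\mathcal U^x\circ\cptp^\uparrow\circ\mathcal U^x$, and instead re-engineers the catalyst particle: the control qubit $C_{i2}$ is weighted $\left(\frac12+\frac{z'}{3},\ \frac12-\frac{z'}{3}\right)$ rather than $\left(\frac12,\frac12\right)$, and the qubit $C_{i1}$ carries the tilted states $\sigma^{\uparrow}(\eta_i,z')$ and $\sigma^{\downarrow}(\eta_i,z')$ written out explicitly in the proof. With these choices one verifies that $\cptp_i$ maps $\rho(\eta_i,z')\otimes\sigma(\eta_i,z')$ to a correlated state whose marginals are $\rho(\eta_{i+1},z')$ and $\sigma(\eta_i,z')$, with
\begin{equation*}
\eta_{i+1}=\eta_i+\frac{\eta_i\left(1-z'^2-\eta_i^2\right)}{6\left(4-z'^2\right)},
\end{equation*}
an increment that is strictly positive for $0<\eta_i<\sqrt{1-z'^2}$; hence $\eta_N$ exceeds any $\eta'<\sqrt{1-z'^2}$ in finitely many steps, and the final dephasing lands exactly on $\rho(\eta',z')$. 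Until you exhibit such a construction (or an existence argument replacing it), what you have is a correct plan, not a proof.
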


\begin{proof}
    In the generalized protocol, we start from a state $\rho(\eta,z)=\frac12(\iden+\eta\sigma^x+z\sigma^z)$ with $\eta>0$, and transform it to the state
\begin{equation}\label{eq:rho_prime}
    \rho'(\eta',z')=\frac12(\iden+\eta'\sigma^x+z'\sigma^z)
\end{equation}
with $0\leq\eta^{\prime}<\sqrt{1-z^{\prime2}}$, under CCTIO.

Most of the setup of the generalized protocol is the same as the original one. Here we make two changes. First, at the beginning of the protocol, we transform the input state $\rho(\eta,z)$ to $\rho(\eta_1,z')$ [instead of $\rho(\eta_1)$ that $\Tr\left[\rho(\eta_1)\sigma^z\right]=0$] with $\eta_1>0$ via local TIO. The second change is that, the reduced state of each two-qubit particle $C_i$ in the catalyst is set to be
\begin{equation}
\begin{aligned}
\sigma(\eta_i,z')
\equiv &(\frac{1}{2}+\frac{z'}{3})\sigma^{\uparrow}(\eta_i,z')\otimes\ket{\uparrow}\bra{\uparrow}\\
+&(\frac{1}{2}-\frac{z'}{3})\sigma^\downarrow(\eta_i,z')\otimes\ket{\downarrow}\bra{\downarrow},
\end{aligned}
\end{equation}
where
\begin{equation}
    \begin{aligned}
    \sigma^{\uparrow}(\eta_i,z')&\equiv\frac{1}{2}\iden+\frac{\sqrt{3}\eta_i}{2(2+z')}\sigma^x+\frac{z' (8 + 3 z')+4-\eta_i^2}{2(2+z')(3+2z')}\sigma^z,\\
    \sigma^\downarrow(\eta_i,z')&\equiv\frac{1}{2}\iden+\frac{\sqrt{3}\eta_i}{2(2-z')}\sigma^x+\frac{z' (8 - 3 z')-4+\eta_i^2}{2(2-z')(3-2z')}\sigma^z
    \end{aligned}
\end{equation}
are states of qubit $C_{i1}$, and $\ket{\uparrow}$ and $\ket{\downarrow}$ are energy eigenstates of qubit $C_{i2}$. Here the parameter $z'$ is determined by the target state $\rho(\eta',z')$, and $\eta_i$ depend on both the initial state and the target state.

After the action of $\cptp_i$, the state of the system qubit becomes $\rho(\eta_{i+1},z')$ with
\begin{equation}
    \eta_{i+1}=\eta_i+ \frac{\eta_i(1-z^{\prime 2}-\eta_i^2)}{6(4-z^{\prime 2})},
\end{equation}
for $i=1,\dots,N$. It is directly checked that $\Delta\eta_i\equiv \eta_{i+1}-\eta_i>0$ for $0<\eta_i<\sqrt{1-z^{\prime2}}$. 
Therefore, with finite $N$, we can achieve state $\rho(\eta_N,z')$ with $\eta'<\eta_N<\sqrt{1-z^{\prime2}}$. 
Clearly, $\rho(\eta',z')\in\tiocone[\rho(\eta_N,z')]$. 
Hence by our protocol, any state as in Eq. (\ref{eq:rho_prime}) can be obtained via CCTIO from state $\rho(\eta,z)$ with $\eta>0$.

Note that in the Bloch representation, any mixed state in the $xz$ plane with $x\geq0$ can be presented in the form of Eq. (\ref{eq:rho_prime}). By the rotational symmetry of the CCTIO cone of a qubit state, we conclude that the CCTIO cone of an asymmetric state contains all the mixed qubit states. This completes the proof.
\end{proof}

\section{Application}

\subsection{Application to the resource theory of athermality}
In the resource theory of athermality \cite{Lostaglio_2019}, the free operations are the thermal operations, which can be implemented by coupling the system to a reservoir at inverse temperature $\beta$ and then shutting down the interaction after a while. 
Here the interaction preserves the total energy, which ensures that thermal operations are translationally invariant. 

It was shown in Ref. \cite{PhysRevX.8.041051} that, for any pair of symmetric states $\rho$ and $\rho'$, there exists a finite-dimensional catalytic system in state $\sigma$ and a global thermal operation $\cptp_\mathrm{TO}$ such that $\cptp_\mathrm{TO}(\rho\otimes\sigma)=\rho'_\epsilon|\sigma$, 
where $\rho'_\epsilon$ is arbitrarily close to $\rho'$, if and only if $F(\rho)\geq F(\rho')$. Here the free energy $F(\rho):=\Tr(\rho H)-\frac{\ln2}{\beta}S(\rho)$, with the von Neumann entropy $S(\rho):=-\Tr(\rho\log_2\rho)$. However, this statement cannot be generalized to the fully quantum regime where $\rho'$ is asymmetric but $\rho$ may be symmetric, because of the no-broadcasting theorem of asymmetry \cite{PhysRevLett.123.020404}.

From Theorem \ref{th2}, the restriction of no-broadcasting can be lifted if the initial state has a small amount of asymmetry. Therefore, the following conjecture may hold. For a given pair of two quantum states $\rho$ and $\rho'$, there exists a finite-dimensional system in state $\sigma$ and a global thermal operation $\cptp_\mathrm{TO}$ such that
\begin{equation}
    \cptp_\mathrm{TO}(\rho_\epsilon\otimes\sigma)=\rho'_{\epsilon'}|\sigma,
\end{equation}
if and only if $F(\rho)\geq F(\rho')$. We leave further discussion of this conjecture to future work.

\subsection{Application to the quantum clock}
After a system is prepared in an asymmetric state $\rho$, it evolves according to its Hamiltonian $H$ as $\rho(t)=e^{-iHt}\rho e^{iHt}$. 
From the asymmetric condition $[\rho,H]\neq0$, the states $\rho(t)$ are not all the same and thus contain some time information. 
In this sense, the evolution of an asymmetric state is considered as the pointer of a quantum clock \cite{clock2003}. 
In general, a quantum clock is identified by the pair $(\rho,H)$, and its accuracy is quantified by the Fisher timing information $I_F(\rho,H):=\Tr(\dot{\rho}\Delta^{-1}_\rho\dot{\rho})$, 
where $\dot{\rho}:=i[\rho,H]$ and $\Delta_\rho B:=\frac12(\rho B+B\rho)$.

A previous result has shown that \cite{PhysRevLett.123.020404}, it is impossible to distribute the timing information of a quantum clock into a system with zero timing information, without affecting the quantum state of the clock. Formally, let $(\rho_1,H_1)$ be a finite-dimensional quantum clock $S_1$, and the pair $(\rho_2,H_2)$ with $[\rho_2,H_2]=0$ denote a system $S_2$ without timing information. Then the no-broadcasting theorem of asymmetry implies that there does not exist a global covariant operation $\cptp$ such that $\cptp(\rho_1\otimes\rho_2)=\rho_1|\rho'_2$ and $[\rho'_2,H_2]\neq0$.

In some realistic circumstances, completely dephasing operations are difficult to implement \emph{exactly} \cite{Liu2018,PhysRevLett.82.5181}. 
Therefore, when initializing the second system $S_2$ to satisfy $[\rho_2,H_2]=0$, one might obtain a pair $(\rho_2,H_2)$ with arbitrarily small but positive Fisher timing information $I_F(\rho_2,H_2)=\epsilon>0$. 
Theorem \ref{th1} indicates that, if $\rho_1$ is pure, the timing information in $S_2$ is still negligible after the action of a global covariant operation that preserves $\rho_1$ and, thus, generalizes the no-distributing principle of timing information to this noisy case.

Nevertheless, if $\rho_1$ is mixed, then it is possible to make the clock $S_2$ more accurate without affecting the state of the clock $S_1$. 
Still, it should be noted that, the design of $S_1$ depends heavily on the initial state $\rho_2$. Therefore, in order to deterministically amplify the accuracy of $S_2$, one has to know the exact form of $\rho_2$.

\section{Discussion and conclusion}
We have investigated the ability of a correlating catalyst to amplify the asymmetry of a quantum system. 
While a catalyst in a pure state cannot extend the set of accessible states from any input state, a large enough catalyst in a mixed state can enable the conversion from a qubit state, which is arbitrarily close to (but not in) the set of symmetric states, to a state arbitrarily close to $\rho^+$, which is the qubit state with maximum asymmetry. 
The asymmetry in the initial state is essential due to the no-broadcasting theorem \cite{PhysRevLett.123.020403,PhysRevLett.123.020404}. 
Besides, in the limit of infinite-dimensional catalysis, $\rho^+$ can be reached as in catalytic coherence \cite{PhysRevLett.113.150402}. 
Hence, our result bridges theses two important results concerning the restrictions on the coherence dynamics under translationally invariant operations.

It is also of interest to study the amplification of asymmetry in higher dimensional systems. 
The main difficulty in solving this problem is that it is numerically hard to calculate the cones of a high-dimensional state. 
An alternative way of dealing with this problem is to calculate the amount of asymmetry amplified by correlating catalysts, but such results would depend on the choice of the asymmetry measure. 
A more meaningful question to ask is, Can we generalize Theorem \ref{th2} to any finite-dimensional system? We conjecture that the answer is ``Yes.'' 
One clue is to use the elementary framework as in Ref. \cite{Lostaglio2018elementarythermal}, i.e., to operate on two energy levels at a time.
This may be feasible, because the bounds on coherence dynamics under TIOs \cite{PhysRevX.5.021001,PhysRevLett.115.210403} do not eliminate the possibility of manipulating the coherence between two energy levels while preserving the coherence between other energy levels.

We have seen that, in resource theories of athermality \cite{PhysRevX.8.041051} and asymmetry, correlating catalysts are strictly more powerful than uncorrelated ones. 
A related open problem is as follows. 
Is there any resource theory, in which the creation of correlations between the catalyst and the system can never extend the set of accessible states? 
A sufficient condition is that for any given bipartite state (whose form is known) of two resourceful systems, one can decouple the two systems while preserving their reduced states by free operations. 
This free-decoupling condition is potentially an interesting problem on its own and, to our knowledge, has been discussed only in the resource theory of athermality \cite{PhysRevA.99.012104}. 
If a resource theory satisfies the free-decoupling condition, then any resource monotones are super-additive. 
However, the converse is not obvious: a variety of coherence monotones are proved to be super-additive \cite{PhysRevX.6.041028,PhysRevA.95.042328,RevModPhys.89.041003,HU20181}, but it is not straightforward to prove that the resource theory of coherence satisfies the free-decoupling condition.

\begin{acknowledgments}
This work was Supported by NATIONAL NATURAL SCIENCE FOUNDATION OF CHINA under Grant No. 11774205, and Young Scholars Program of Shandong University.
\end{acknowledgments}


\appendix

\section{PROOF OF THEOREM \ref{th1}}\label{app:th1}
Here we first prove a series of lemmas and then present the proof of Theorem \ref{th1}.

\begin{lemma}\label{lemma:neother}
(Neother's theorem \cite{Marvian_2013}). For two pure states $\ket{\psi}$ and $\ket{\psi'}$, there exists a covariant unitary $V$ such that $V\ket{\psi}=\ket{\psi'}$ if and only if
\begin{equation}
\bra{\psi}e^{-iHt}\ket{\psi}=\bra{\psi'}e^{-iHt}\ket{\psi'},\ \forall t.
\end{equation}
\end{lemma}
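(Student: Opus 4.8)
The plan is to recognize $\bra{\psi}e^{-iHt}\ket{\psi}$ as the characteristic function of the energy distribution of $\ket{\psi}$ and to translate the stated condition into the assertion that $\ket{\psi}$ and $\ket{\psi'}$ carry identical energy statistics. Write the spectral decomposition $H=\sum_E E\,P_E$, where $P_E$ is the orthogonal projector onto the eigenspace $\mathcal H_E$, and set $p_E=\bra{\psi}P_E\ket{\psi}=\|P_E\ket{\psi}\|^2$ and $p'_E=\|P_E\ket{\psi'}\|^2$. Then $\bra{\psi}e^{-iHt}\ket{\psi}=\sum_E e^{-iEt}p_E$, and similarly for $\psi'$. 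Since the functions $t\mapsto e^{-iEt}$ attached to distinct eigenvalues are linearly independent, equality of the two sums for all $t$ is equivalent to $p_E=p'_E$ for every $E$. I will also use that a covariant unitary $V$ is exactly one commuting with $H$, i.e.\ one leaving each eigenspace $\mathcal H_E$ invariant (in finite dimensions the energy-shift solutions of channel covariance are excluded, since a finite spectrum cannot be translation invariant). Thus the lemma reduces to: a covariant unitary carrying $\ket{\psi}$ to $\ket{\psi'}$ exists if and only if the two states share the same energy distribution $\{p_E\}$.

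For the forward direction, suppose such a $V$ exists. Because $V$ commutes with $H$ it commutes with every spectral projector, so $P_E\ket{\psi'}=P_E V\ket{\psi}=V P_E\ket{\psi}$; unitarity of $V$ then gives $p'_E=\|P_E\ket{\psi'}\|^2=\|V P_E\ket{\psi}\|^2=\|P_E\ket{\psi}\|^2=p_E$, so the characteristic functions coincide.

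For the converse I would build $V$ block by block across the eigenspaces. Assuming $p_E=p'_E$ for all $E$, on each sector with $p_E>0$ the unit vectors $P_E\ket{\psi}/\sqrt{p_E}$ and $P_E\ket{\psi'}/\sqrt{p_E}$ both lie in $\mathcal H_E$, so there is a unitary $V_E$ on $\mathcal H_E$ sending the first to the second; on sectors with $p_E=0$ set $V_E=\iden_{\mathcal H_E}$. Then $V=\bigoplus_E V_E$ is unitary, commutes with $H$ by construction (hence is covariant), and $V\ket{\psi}=\sum_E V_E P_E\ket{\psi}=\sum_E P_E\ket{\psi'}=\ket{\psi'}$, where the middle equality uses $p_E=p'_E$.

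The two algebraic steps above are routine; the step that needs care is the converse construction, specifically checking that $V_E$ exists and is well defined in every eigenspace irrespective of its dimension (including one-dimensional sectors, where matching two unit vectors of equal norm only fixes a phase, and zero-probability sectors, where no constraint arises), and verifying that the direct sum of these blocks really is a single unitary commuting with $H$. The one genuinely analytic ingredient is the linear independence of the exponentials $e^{-iEt}$ used to invert the characteristic function, which I would justify by the uniqueness theorem for Fourier (almost-periodic) series, or, in the finite-dimensional case relevant here, by evaluating at enough times to obtain an invertible Vandermonde system in the distinct phases.
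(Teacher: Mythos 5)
Your proof is correct, but note that the paper itself never proves this lemma: it is stated as a cited result (Noether's theorem of Marvian and Spekkens \cite{Marvian_2013}), so there is no in-paper argument to compare against. Your route is the standard one for the time-translation group, and it is sound at every step: the forward direction follows immediately from $[V,H]=0$ (indeed even more directly from $\bra{\psi}V^\dagger e^{-iHt}V\ket{\psi}=\bra{\psi}e^{-iHt}\ket{\psi}$, without passing through the distributions); the converse correctly reduces, via linear independence of the exponentials $e^{-iEt}$ with distinct real frequencies, to equality of the energy distributions $p_E=p'_E$, after which the block construction $V=\bigoplus_E V_E$ is well defined (a unitary on $\mathcal H_E$ mapping one unit vector to another always exists, and zero-probability sectors impose no constraint) and manifestly commutes with $H$. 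Your parenthetical remark about excluding energy-shift solutions of channel covariance in finite dimensions is also the right observation, and it is consistent with the paper's usage, since the paper's later lemmas explicitly take ``covariant unitary'' to mean $[V,H]=0$. The only difference worth recording is one of generality: the cited Marvian--Spekkens theorem is proved for arbitrary compact Lie groups using the decomposition of the representation into irreps, with the characteristic function $\chi_\psi(g)=\bra{\psi}U(g)\ket{\psi}$ playing the role your $\sum_E e^{-iEt}p_E$ plays here; your Fourier/Vandermonde argument is the elementary specialization of that machinery to the one-parameter group $e^{-iHt}$, which is all this paper actually needs.
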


\begin{lemma}\label{lemma:equivalent}
(Marvian and Spekkens \cite{Marvian_2013}). For two pure states $\ket{\psi},\ket{\psi'}\in \mathcal H_S$ and a pure catalytic state $\ket{\phi}\in \mathcal H_C$, if a global covariant unitary $U$ induces the transformation
\begin{equation}
\label{eq:equi} U(\ket{\psi}\otimes\ket{\phi})=\ket{\psi'}\otimes\ket{\phi},
\end{equation}
then there exists a covariant unitary $V$ acting on $\mathcal H_S$ such that $V\ket{\psi}=\ket{\psi'}$.
\begin{proof}
From Lemma \ref{lemma:neother} and Eq. (\ref{eq:equi}), we have $\bra{\psi}e^{-iH_St}\ket{\psi}\bra{\phi}e^{-iH_Ct}\ket{\phi}=\bra{\psi'}e^{-iH_St}\ket{\psi'}\bra{\phi}e^{-iH_Ct}\ket{\phi},\ \forall t$. Because $\bra{\phi}e^{-iH_Ct}\ket{\phi}\neq0$, it follows that $\bra{\psi}e^{-iH_St}\ket{\psi}=\bra{\psi'}e^{-iH_St}\ket{\psi'},\ \forall t$. Then from Lemma \ref{lemma:neother}, $\ket{\psi}$ can be transformed to $\ket{\psi'}$ via a covariant unitary.
\end{proof}
\end{lemma}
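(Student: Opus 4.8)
The plan is to invoke Neother's theorem (Lemma \ref{lemma:neother}) at two levels—once on the composite system $SC$ and once on the system $S$ alone—using the factorization $e^{-i(H_S+H_C)t}=e^{-iH_St}\otimes e^{-iH_Ct}$ to pass between them. First I would regard $\ket{\psi}\otimes\ket{\phi}$ and $\ket{\psi'}\otimes\ket{\phi}$ as two pure states of the joint system, whose generator is $H_S+H_C$. Since the hypothesis supplies a covariant unitary $U$ mapping the former to the latter, the ``only if'' direction of Lemma \ref{lemma:neother} applied on $SC$ yields equality of their characteristic functions, $(\bra{\psi}\otimes\bra{\phi})e^{-i(H_S+H_C)t}(\ket{\psi}\otimes\ket{\phi})=(\bra{\psi'}\otimes\bra{\phi})e^{-i(H_S+H_C)t}(\ket{\psi'}\otimes\ket{\phi})$ for all $t$.

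Next I would exploit that $H_S$ and $H_C$ act on distinct tensor factors, so the joint exponential factorizes and each side splits into a product of a system and a catalyst characteristic function, giving $\braket{\psi|e^{-iH_St}|\psi}\braket{\phi|e^{-iH_Ct}|\phi}=\braket{\psi'|e^{-iH_St}|\psi'}\braket{\phi|e^{-iH_Ct}|\phi}$ for all $t$. The goal is then to cancel the common catalyst factor $\braket{\phi|e^{-iH_Ct}|\phi}$ and conclude $\braket{\psi|e^{-iH_St}|\psi}=\braket{\psi'|e^{-iH_St}|\psi'}$ for all $t$, after which the ``if'' direction of Lemma \ref{lemma:neother}—now on $S$ alone—delivers the desired covariant unitary $V$ on $\mathcal H_S$ with $V\ket{\psi}=\ket{\psi'}$, completing the argument.

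The delicate step, and the one I expect to be the main obstacle, is the cancellation of $\braket{\phi|e^{-iH_Ct}|\phi}$: although this function equals $1$ at $t=0$, it is a finite sum of phases $\sum_k|c_k|^2 e^{-iE_k t}$ and can genuinely vanish at isolated values of $t$ (e.g.\ for an equal-weight two-level catalyst it vanishes at $t=\pi/\Delta$), so one cannot divide pointwise. To handle this I would restrict to a neighborhood of $t=0$, where continuity guarantees $\braket{\phi|e^{-iH_Ct}|\phi}\neq0$, divide there to get $\braket{\psi|e^{-iH_St}|\psi}=\braket{\psi'|e^{-iH_St}|\psi'}$ locally, and then use that both sides are entire functions of $t$ (again finite sums of phases): by the identity theorem, agreement on an interval forces agreement for all $t$. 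This analytic-continuation argument closes the gap left by a naive ``nonvanishing'' claim and secures the single-system hypothesis needed to apply Neother's theorem.
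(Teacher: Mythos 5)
Your proof follows exactly the paper's route: apply Lemma \ref{lemma:neother} to the joint states on $SC$, factor the characteristic function into system and catalyst parts via $e^{-i(H_S+H_C)t}=e^{-iH_St}\otimes e^{-iH_Ct}$, cancel the catalyst factor, and apply Lemma \ref{lemma:neother} again on $S$ alone. The one substantive difference is in your favor: the paper justifies the cancellation with the bare assertion ``Because $\bra{\phi}e^{-iH_Ct}\ket{\phi}\neq0$,'' which, read pointwise in $t$, is false in general --- as you observe, for a finite-dimensional catalyst this is a finite sum of phases $\sum_k|c_k|^2e^{-iE_kt}$ and can vanish at isolated times (your equal-weight two-level example at $t=\pi/\Delta$ is correct). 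Your repair is sound and standard: $\bra{\phi}e^{-iH_Ct}\ket{\phi}=1$ at $t=0$, so by continuity it is nonvanishing on an interval; dividing there yields $\bra{\psi}e^{-iH_St}\ket{\psi}=\bra{\psi'}e^{-iH_St}\ket{\psi'}$ locally; and since both sides are finite exponential sums, hence restrictions of entire functions of $t$, the identity theorem propagates the equality to all $t$, which is what Lemma \ref{lemma:neother} needs. (An equivalent shortcut: the catalyst factor, being entire and not identically zero, has isolated zeros, so the difference of the two system characteristic functions vanishes on a dense set and, by continuity, everywhere.) So your proposal is not merely the same argument --- it supplies the rigor the paper's one-line cancellation elides, without changing the overall structure or conclusion.
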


\begin{lemma}\label{lemma:pre}
Let $\ket{\tilde a_0},\dots,\ket{\tilde a_{n-1}},\ket{\tilde b}$ be state vectors in Hilbert space $\mathcal H$ which need not be normalized. If for a given unitary $V_1$ the following equations hold
\begin{equation}
\bra{\tilde a_j}\ket{\tilde b}=\bra{\tilde a_j}V_1\ket{\tilde b},\ \forall j,\label{eq:inner}
\end{equation}
then there exists a unitary $V$ such that $V\ket{\tilde a_j}=\ket{\tilde a_j},\ \forall j$ and $V\ket{\tilde b}=V_1\ket{\tilde b}$.
\begin{proof}
Here we first consider two trivial cases. \\
Case 1. $\ket{\tilde b}=0$. In this case we simply set $V=\iden$.\\
Case 2. $\ket{\tilde a_j}=0,\ \forall j$. In this case, one can set $V=V_1$.

In the situation where the above two cases are excluded, let $\{\ket{a_0},\dots,\ket{a_{d-1}}\}$ be an orthonormal basis for the subspace $\mathcal A\equiv\SPAN\{\ket{\tilde a_0},\dots,\ket{\tilde a_{n-1}}\}$, and $\mathcal A_\perp\subset\mathcal H$ be the subspace orthogonal to $\mathcal A$. Then $\ket{\tilde b}$ can be written as
\begin{equation}
\ket{\tilde b}=\sum_{k=0}^{d-1}b_k\ket{a_k}+b_d\ket{b_\perp},
\end{equation}
where $\ket{b_\perp}\in\mathcal A_\perp$. From Eq. (\ref{eq:inner}), for any state $\ket{a}\in\mathcal A$, it holds that $\bra{ a}V_1\ket{\tilde b}=\bra{ a}\ket{\tilde b}$, and hence, $\bra{ a_k}V_1\ket{\tilde b}=\bra{ a_k}\ket{\tilde b}=b_k$, for $k=0,\dots,d-1$. Therefore,
\begin{equation}
V_1\ket{\tilde b}=\sum_{k=0}^{d-1}b_k\ket{a_k}+b_d\ket{b'_\perp},
\end{equation}
where $\ket{b'_\perp}\in\mathcal A_\perp$ and $\bra{ b'_\perp}\ket{b'_\perp}=\bra{ b_\perp}\ket{b_\perp}$. It follows that there exists a unitary $V_\perp$ acting on $\mathcal A_\perp$ such that $V_\perp\ket{b_\perp}=\ket{b'_\perp}$. Now we set $V=\Pi_{\mathcal A}\oplus V_\perp$, where $\Pi_{\mathcal A}$ is the projection to subspace $\mathcal A$.
\end{proof}
\end{lemma}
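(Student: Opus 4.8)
The plan is to reduce everything to the orthogonal splitting $\mathcal H=\mathcal A\oplus\mathcal A_\perp$, where $\mathcal A\equiv\SPAN\{\ket{\tilde a_0},\dots,\ket{\tilde a_{n-1}}\}$ and $\mathcal A_\perp$ is its orthogonal complement. First I would read the hypothesis $\bra{\tilde a_j}\ket{\tilde b}=\bra{\tilde a_j}V_1\ket{\tilde b}$ for all $j$ as the statement that $\ket{\tilde b}$ and $V_1\ket{\tilde b}$ have identical overlaps with every generator of $\mathcal A$, hence with every vector of $\mathcal A$ by linearity. Since the component of any vector lying in $\mathcal A$ is completely determined by its inner products against all of $\mathcal A$, this forces the $\mathcal A$-parts of $\ket{\tilde b}$ and $V_1\ket{\tilde b}$ to coincide. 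Writing $\Pi_{\mathcal A}$ for the orthogonal projector onto $\mathcal A$, I would then record the decompositions $\ket{\tilde b}=\Pi_{\mathcal A}\ket{\tilde b}+\ket{b_\perp}$ and $V_1\ket{\tilde b}=\Pi_{\mathcal A}\ket{\tilde b}+\ket{b'_\perp}$, with $\ket{b_\perp},\ket{b'_\perp}\in\mathcal A_\perp$ sharing the same parallel part $\Pi_{\mathcal A}\ket{\tilde b}$.

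Next I would bring in the unitarity of $V_1$, which gives $\bra{\tilde b}\ket{\tilde b}=\bra{\tilde b}V_1^\dagger V_1\ket{\tilde b}$. Because the common parallel part is orthogonal to both $\ket{b_\perp}$ and $\ket{b'_\perp}$, the two norms split as $\bra{\tilde b}\Pi_{\mathcal A}\ket{\tilde b}+\bra{b_\perp}\ket{b_\perp}$ and $\bra{\tilde b}\Pi_{\mathcal A}\ket{\tilde b}+\bra{b'_\perp}\ket{b'_\perp}$ respectively, so cancelling the shared term yields $\bra{b_\perp}\ket{b_\perp}=\bra{b'_\perp}\ket{b'_\perp}$. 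Thus $\ket{b_\perp}$ and $\ket{b'_\perp}$ are two equal-length vectors inside the single subspace $\mathcal A_\perp$, and I can pick a unitary $V_\perp$ on $\mathcal A_\perp$ with $V_\perp\ket{b_\perp}=\ket{b'_\perp}$ (taking $V_\perp=\iden$ if both vanish). I would then set $V\equiv\Pi_{\mathcal A}\oplus V_\perp$, which is unitary on $\mathcal H$, acting as the identity on $\mathcal A$ and as $V_\perp$ on $\mathcal A_\perp$.

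It then remains to verify the two required properties directly. Each generator satisfies $\ket{\tilde a_j}\in\mathcal A$, so $V\ket{\tilde a_j}=\Pi_{\mathcal A}\ket{\tilde a_j}=\ket{\tilde a_j}$. For the target vector, the decomposition gives $V\ket{\tilde b}=\Pi_{\mathcal A}\ket{\tilde b}+V_\perp\ket{b_\perp}=\Pi_{\mathcal A}\ket{\tilde b}+\ket{b'_\perp}=V_1\ket{\tilde b}$, as desired. The degenerate situations are subsumed automatically: if $\ket{\tilde b}=0$ every component vanishes and $V=\iden$ works, while if all $\ket{\tilde a_j}=0$ then $\mathcal A=\{0\}$, $\mathcal A_\perp=\mathcal H$, $\Pi_{\mathcal A}=0$, and $V=V_\perp=V_1$ does the job.

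I expect the only genuinely delicate point to be the claim in the first step that the inner-product hypothesis pins down the entire $\mathcal A$-component of $\ket{\tilde b}$, rather than merely its overlaps with the possibly linearly dependent, non-normalized generators $\ket{\tilde a_j}$. The clean way to settle this is to pass to an orthonormal basis $\{\ket{a_0},\dots,\ket{a_{d-1}}\}$ of $\mathcal A$ and observe that matching overlaps with every $\ket{\tilde a_j}$ is equivalent to matching overlaps with each $\ket{a_k}$, which are exactly the expansion coefficients of the parallel part; once this is in place, the norm-matching and the construction of $V_\perp$ are routine finite-dimensional linear algebra.
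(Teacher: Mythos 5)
Your proposal is correct and follows essentially the same route as the paper's proof: decompose $\mathcal H=\mathcal A\oplus\mathcal A_\perp$, use linearity to conclude that $\ket{\tilde b}$ and $V_1\ket{\tilde b}$ share the same $\mathcal A$-component, use unitarity of $V_1$ to match the norms of the perpendicular parts, construct $V_\perp$ on $\mathcal A_\perp$, and set $V=\Pi_{\mathcal A}\oplus V_\perp$. The only (cosmetic) difference is that you phrase the parallel-part argument via the projector $\Pi_{\mathcal A}$ while the paper expands $\ket{\tilde b}$ in an explicit orthonormal basis of $\mathcal A$; you also correctly note that the two degenerate cases the paper treats separately are subsumed by the general construction.
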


\begin{lemma}\label{lemma:set}
Let $\ket{\phi}$ be a state in $\mathcal H_C$ and $\{\ket{\psi_j}\}_{j=0}^{k-1}$ be a set of states in $\mathcal H_S$. If a covariant unitary $U$ acting on $\mathcal H_S\otimes\mathcal H_C$ induces the state conversion
\begin{equation}
\label{eq:equi1} U(\ket{\psi_j}\otimes\ket{\phi})=\ket{\psi'_j}\otimes\ket{\phi},\ j=0,\dots,k-1,
\end{equation}
then there is a covariant unitary $V$ acting on $\mathcal H_S$ such that $V\ket{\psi_j}=\ket{\psi'_j},\ \forall j=0,\dots,k-1$.
\begin{proof}
For $k=1$, it is obvious from Lemma \ref{lemma:equivalent}.

Now we assume that it holds for $k=n$ ($n\leq\dim(H_S)-1$) and prove that it holds for $k=n+1$. From the assumption, a covariant $V_a$ exists such that $\ket{\psi'_j}=V_a\ket{\psi_j}$ for $0\leq j\leq n-1$. The condition as in Eq. (\ref{eq:equi1}) is then written as
\begin{eqnarray}
\label{eq:equi2} U_a(\ket{\psi_j}\otimes\ket{\phi})&=&\ket{\psi_j}\otimes\ket{\phi},\ j=0,\dots,n-1,\\
\label{eq:equi3} U_a(\ket{\psi_n}\otimes\ket{\phi})&=&\ket{\psi''_n}\otimes\ket{\phi},
\end{eqnarray}
where $U_a=(V_a^\dagger\otimes\iden_C)U$ is still a global covariant unitary, and $\ket{\psi''_n}=V_a^\dagger\ket{\psi'_n}$. Now we define a state $\ket{\psi}=c_a\ket{\psi_j}+c_b\ket{\psi_n}$, where $c_a,c_b\neq0$ and $\ket{\psi_j}$ is chosen arbitrarily from $\{\ket{\psi_j}\}_{j=0}^{n-1}$. From Eqs. (\ref{eq:equi2}) and (\ref{eq:equi3}), we have
\begin{equation}
\label{eq:equi4} U_a(\ket{\psi}\otimes\ket{\phi})=\ket{\psi'}\otimes\ket{\phi},
\end{equation}
where $\ket{\psi'}=c_a\ket{\psi_j}+c_b\ket{\psi''_n}$. From Lemma \ref{lemma:equivalent}, Eqs. (\ref{eq:equi3}) and (\ref{eq:equi4}) imply that $\ket{\psi''_n}=V_b\ket{\psi_n}$ and $\ket{\psi'}=V_0\ket{\psi}$, respectively, where $V_b$ and $V_0$ are covariant unitary operations. Hence,
\begin{equation}
\label{eq:equi5} V_0(c_a\ket{\psi_j}+c_b\ket{\psi_n})=c_a\ket{\psi_j}+c_bV_b\ket{\psi_n}.
\end{equation}

Let $\mathcal H^{(i)}$ be the $i$th energy eighenspace of the system, and then $\mathcal H_S=\oplus_i\mathcal H^{(i)}$. One can write $\ket{\psi_j}=\sum_i\ket{\tilde a_j^{(i)}},\ \ket{\psi_n}=\sum_i\ket{\tilde b^{(i)}}$
where $\ket{\tilde a_j^{(i)}},\ket{\tilde b^{(i)}}\in\mathcal H^{(i)}$ are not necessarily normalized. Because $[V_b,H_S]=[V_0,H_S]=0$, we have $V_b=\oplus_iV_b^{(i)}$ and $V_0=\oplus_iV_0^{(i)}$, where $V_b^{(i)}$ and $V_0^{(i)}$ are unitary operators acting on $\mathcal H^{(i)}$. Eq. (\ref{eq:equi5}) is then rewritten as
\begin{equation}
\label{eq:equi6} V_0^{(i)}\left(c_a\ket{\tilde a_j^{(i)}}+c_b\ket{\tilde b^{(i)}}\right)=c_a\ket{\tilde a_j^{(i)}}+c_bV_b^{(i)}\ket{\tilde b^{(i)}},\ \forall i.
\end{equation}
This means that the states $c_a\ket{\tilde a_j^{(i)}}+c_b\ket{\tilde b^{(i)}}$ and $c_a\ket{\tilde a_j^{(i)}}+c_bV_b^{(i)}\ket{\tilde b^{(i)}}$ are unitarily equivalent, so their norms are equal. By noting that this argument holds for all coefficients $c_a$ and $c_b$, and for arbitrary choice of $\ket{\psi_j}$, we arrive at
\begin{equation}
\bra{\tilde a_j^{(i)}}\ket{\tilde b^{(i)}}=\bra{\tilde a_j^{(i)}}V_b^{(i)}\ket{\tilde b^{(i)}},\ \forall i,j.
\end{equation}
Then by Lemma \ref{lemma:pre}, a covariant unitary $V_1$ exists such that $V_1\ket{\psi_j}=\ket{\psi_j}=V_a^\dagger\ket{\psi'_j}$ for $j=0,\dots,n-1$, and $V_1\ket{\psi_n}=\ket{\psi''_n}=V_a^\dagger\ket{\psi'_n}$. By setting $V=V_aV_1$, we find that $\ket{\psi'_j}=V\ket{\psi_j}$, for $j=0,\dots,n$, i.e., this lemma holds for $k=n+1$. This completes the proof.
\end{proof}
\end{lemma}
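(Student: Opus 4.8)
The plan is to bypass the induction on $k$ altogether and instead prove, directly, the matrix-valued strengthening of Neother's theorem (Lemma \ref{lemma:neother}): I will show that the hypothesis forces the Gram matrices of the energy-projected vectors to coincide level by level, and that this coincidence is exactly what is needed to assemble a \emph{single} block-diagonal (hence covariant) unitary $V$ realizing all the maps $\ket{\psi_j}\mapsto\ket{\psi'_j}$ simultaneously. The advantage of this route is that covariance of $V$ is built in from the start, so no gluing of level-by-level isometries is required beyond routine linear algebra.

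First I would set up the reduction. Decompose $\mathcal H_S=\oplus_m\mathcal H^{(m)}$ into energy eigenspaces with projectors $P_m$, and write $\ket{\psi_j^{(m)}}=P_m\ket{\psi_j}$ and $\ket{\psi_j^{\prime(m)}}=P_m\ket{\psi'_j}$. A unitary on $\mathcal H_S$ is covariant precisely when it has the block-diagonal form $V=\oplus_m V^{(m)}$ with each $V^{(m)}$ unitary on $\mathcal H^{(m)}$, and such a $V$ satisfies $V\ket{\psi_j}=\ket{\psi'_j}$ for all $j$ iff $V^{(m)}\ket{\psi_j^{(m)}}=\ket{\psi_j^{\prime(m)}}$ for every $m,j$. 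By the standard criterion that one family of vectors can be mapped onto another by a single unitary exactly when their Gram matrices agree, the whole lemma reduces to the finitely many scalar identities
\begin{equation}
\braket{\psi_i^{(m)}}{\psi_j^{(m)}}=\braket{\psi_i^{\prime(m)}}{\psi_j^{\prime(m)}},\qquad\forall i,j,m.\label{eq:gram}
\end{equation}

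The core step is to derive (\ref{eq:gram}) from covariance of $U$. Since $[U,H_S+H_C]=0$ gives $U^\dagger e^{-i(H_S+H_C)t}U=e^{-i(H_S+H_C)t}$, sandwiching this between $\bra{\psi_i}\otimes\bra{\phi}$ and $\ket{\psi_j}\otimes\ket{\phi}$ and using $U(\ket{\psi_j}\otimes\ket{\phi})=\ket{\psi'_j}\otimes\ket{\phi}$ yields, for all $t$,
\begin{equation}
\bra{\psi_i}e^{-iH_St}\ket{\psi_j}\bra{\phi}e^{-iH_Ct}\ket{\phi}=\bra{\psi'_i}e^{-iH_St}\ket{\psi'_j}\bra{\phi}e^{-iH_Ct}\ket{\phi}.
\end{equation}
The catalyst factor $\bra{\phi}e^{-iH_Ct}\ket{\phi}$ is a finite sum of exponentials equal to $\braket{\phi}{\phi}>0$ at $t=0$, hence nonzero near the origin; cancelling it there and invoking analyticity of both sides extends the cross-characteristic identity $\bra{\psi_i}e^{-iH_St}\ket{\psi_j}=\bra{\psi'_i}e^{-iH_St}\ket{\psi'_j}$ to all $t$. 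Expanding $e^{-iH_St}=\sum_m e^{-iE_mt}P_m$ turns each side into $\sum_m e^{-iE_mt}\braket{\psi_i^{(m)}}{\psi_j^{(m)}}$ and its primed analogue; linear independence of the exponentials $e^{-iE_mt}$ over distinct energies then forces (\ref{eq:gram}) by matching coefficients. To finish, in each $\mathcal H^{(m)}$ the identity (\ref{eq:gram}) provides an isometry from $\SPAN\{\ket{\psi_j^{(m)}}\}_j$ onto $\SPAN\{\ket{\psi_j^{\prime(m)}}\}_j$, which I extend arbitrarily to a unitary $V^{(m)}$ on $\mathcal H^{(m)}$, and set $V=\oplus_m V^{(m)}$.

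The main obstacle is the cancellation of the catalyst factor for \emph{all} $t$ rather than merely near $t=0$: this factor can vanish at isolated points, so the extension genuinely relies on analyticity (equivalently, on both sides being finite exponential sums, hence almost periodic). The remaining ingredients are routine—linear independence of distinct exponentials, and the fact that equal Gram matrices give equal-dimensional spans inside a common $\mathcal H^{(m)}$, so the level-wise isometries extend to unitaries—while the block-diagonal form of $V$ automatically secures covariance.
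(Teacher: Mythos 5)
Your proof is correct, and it takes a genuinely different route from the paper's. The paper argues by induction on $k$: it invokes Lemma \ref{lemma:equivalent} one state at a time, forms the superpositions $c_a\ket{\psi_j}+c_b\ket{\psi_n}$, extracts the level-wise inner-product identities from a norm-equality (polarization) argument over all coefficients $c_a,c_b$ applied to Eq.~(\ref{eq:equi5}), and then glues the pieces with the bespoke Lemma \ref{lemma:pre} to pass from $k=n$ to $k=n+1$. You instead prove a multi-vector strengthening of Lemma \ref{lemma:neother} in one shot: covariance of $U$ together with the product hypothesis yields the \emph{cross}-characteristic identities $\bra{\psi_i}e^{-iH_St}\ket{\psi_j}\bra{\phi}e^{-iH_Ct}\ket{\phi}=\bra{\psi'_i}e^{-iH_St}\ket{\psi'_j}\bra{\phi}e^{-iH_Ct}\ket{\phi}$ for all $i,j,t$ (the paper only ever uses the diagonal case $i=j$); cancelling the catalyst factor near $t=0$ and extending by analyticity, then matching coefficients of the linearly independent exponentials $e^{-iE_mt}$, gives equality of the Gram matrices of the energy-projected families, which is precisely the criterion for a single block-diagonal --- hence covariant --- unitary $V=\oplus_m V^{(m)}$ mapping all $\ket{\psi_j}$ to $\ket{\psi'_j}$ simultaneously. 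Your route is uniform in $k$ (no induction, no side condition like $n\leq\dim(\mathcal H_S)-1$), dispenses with Lemma \ref{lemma:pre} entirely, and makes explicit that existence of a joint covariant $V$ is \emph{equivalent} to the cross-characteristic identities, a cleaner statement than the paper records; the paper's route, in exchange, stays entirely at the level of the cited single-state results of Marvian and Spekkens plus elementary manipulations, never needing the off-diagonal identity. Two ingredients you rely on deserve explicit mention, and both are available here: finite-dimensionality of $\mathcal H_C$ (guaranteed in the paper's setting) is what makes $\bra{\phi}e^{-iH_Ct}\ket{\phi}$ a finite exponential sum, hence entire, so the cancellation legitimately extends from a neighborhood of the origin to all $t$; and in each $\mathcal H^{(m)}$ equal Gram matrices give spans of equal dimension inside a common space, so the level-wise isometry extends to a unitary $V^{(m)}$ --- you flag both, so the argument is complete.
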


\begin{lemma}\label{lemma:mix}
For any two states $\rho$ and $\rho'$, and a given pure catalytic state $\ket{\phi}$ of finite dimension, if there exists a covariant unitary $U$, which satisfies $[U,H_S+H_C]=0$, such that
\begin{equation}
\label{eq:equi_mix}U(\rho\otimes\ket{\phi}\bra{\phi})U^\dagger=\rho'\otimes\ket{\phi}\bra{\phi},
\end{equation}
then there exists a covariant unitary $V$ satisfying $[V,H_S]=0$, such that $V\rho V^\dagger=\rho'$.
\begin{proof}
Given $\rho=\sum_jp_j\ket{\psi_j}\bra{\psi_j}$, Eq. (\ref{eq:equi_mix}) is equivalent to
\begin{equation}
\sum_jp_j\ket{\Psi_j}\bra{\Psi_j}=\rho'\otimes\ket{\phi}\bra{\phi},
\end{equation}
where $\ket{\Psi_j}=U(\ket{\psi_j}\otimes\ket{\phi})$. By taking a partial trace on $S$, we have $\sum_jp_j\Tr_S(\ket{\Psi_j}\bra{\Psi_j})=\ket{\phi}\bra{\phi}$, and hence, $\Tr_S(\ket{\Psi_j}\bra{\Psi_j})=\ket{\phi}\bra{\phi}$ for each $j$. This means that each $\ket{\Psi_j}$ is a product state, i.e., $U(\ket{\psi_j}\otimes\ket{\phi})=\ket{\psi'_j}\otimes\ket{\phi},\ \forall j$. By Lemma \ref{lemma:set}, a covariant unitary $V$ exists such that $\ket{\psi'_j}=V\ket{\psi_j},\ \forall j$. It is obvious that $\rho'=\sum_j p_j\ket{\psi'_j}\bra{\psi'_j}$, so we have $\rho'=V\rho V^\dagger$.
\end{proof}
\end{lemma}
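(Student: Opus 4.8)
The plan is to reduce this mixed-state statement to the pure-state setting already settled by Lemma~\ref{lemma:set}. First I would fix a pure-state decomposition $\rho=\sum_j p_j\ket{\psi_j}\bra{\psi_j}$ (the spectral one suffices), so that the left-hand side of Eq.~(\ref{eq:equi_mix}) reads $\sum_j p_j\ket{\Psi_j}\bra{\Psi_j}$ with $\ket{\Psi_j}:=U(\ket{\psi_j}\otimes\ket{\phi})$. The hypothesis then becomes $\sum_j p_j\ket{\Psi_j}\bra{\Psi_j}=\rho'\otimes\ket{\phi}\bra{\phi}$, and the entire task is to extract a single covariant $V$ on $\mathcal H_S$ that reproduces the system-side action common to all the branches $j$.

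The crux is to show that each $\ket{\Psi_j}$ with $p_j>0$ is again a product of the form $\ket{\psi'_j}\otimes\ket{\phi}$. I would obtain this by tracing out $S$: since the reduced state of $\rho'\otimes\ket{\phi}\bra{\phi}$ on $C$ is $\ket{\phi}\bra{\phi}$, we get $\sum_j p_j\,\Tr_S(\ket{\Psi_j}\bra{\Psi_j})=\ket{\phi}\bra{\phi}$. Each $\Tr_S(\ket{\Psi_j}\bra{\Psi_j})$ is a density operator, and $\ket{\phi}\bra{\phi}$ is an extreme point of the state space; a convex combination equal to an extreme point forces every term of positive weight to coincide with it, so $\Tr_S(\ket{\Psi_j}\bra{\Psi_j})=\ket{\phi}\bra{\phi}$ whenever $p_j>0$. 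A bipartite pure vector whose marginal on $C$ is itself pure cannot be entangled, hence factorizes; matching the marginal fixes the $C$-factor as $\ket{\phi}$, giving $U(\ket{\psi_j}\otimes\ket{\phi})=\ket{\psi'_j}\otimes\ket{\phi}$ for all relevant $j$.

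With this product structure established, I would invoke Lemma~\ref{lemma:set}: because the global covariant unitary $U$ simultaneously effects $U(\ket{\psi_j}\otimes\ket{\phi})=\ket{\psi'_j}\otimes\ket{\phi}$ on the finite family $\{\ket{\psi_j}\}$, there is a covariant unitary $V$ on $\mathcal H_S$ (hence $[V,H_S]=0$) with $V\ket{\psi_j}=\ket{\psi'_j}$ for every $j$. Finally I would reconstruct the output by tracing out $C$ in $\rho'\otimes\ket{\phi}\bra{\phi}=\sum_j p_j\ket{\psi'_j}\bra{\psi'_j}\otimes\ket{\phi}\bra{\phi}$, which yields $\rho'=\sum_j p_j\ket{\psi'_j}\bra{\psi'_j}=\sum_j p_j V\ket{\psi_j}\bra{\psi_j}V^\dagger=V\rho V^\dagger$, as claimed.

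The step I expect to be the main obstacle is the purity/extremality argument in the second paragraph: one must be sure that the marginals of the \emph{individual} (non-orthogonal) branches $\ket{\Psi_j}$ are each forced down to $\ket{\phi}\bra{\phi}$, and that this then rigidly implies factorization of each $\ket{\Psi_j}$. This is exactly the place where the hypothesis of a \emph{pure} catalyst is indispensable—if $\ket{\phi}\bra{\phi}$ were replaced by a mixed reduced state it would no longer be extreme, the branches could remain correlated, and the reduction to Lemma~\ref{lemma:set} would break down. Everything following the factorization is routine bookkeeping.
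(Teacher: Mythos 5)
Your proposal is correct and follows essentially the same route as the paper's own proof: decompose $\rho$ into pure states, trace out $S$ to force each branch's $C$-marginal to $\ket{\phi}\bra{\phi}$, conclude each $\ket{\Psi_j}$ factorizes, and then apply Lemma~\ref{lemma:set} to extract a single covariant $V$. The only difference is that you make explicit the extremality and purity-implies-product steps that the paper leaves implicit, which is a welcome clarification rather than a new argument.
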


Now we are ready to present the proof of Theorem \ref{th1}, which we repeat as follows.
\begin{theorem}\label{appth1}
If $\rho$ cannot be transformed to $\rho'$ under TIO, then the transformation $\rho\otimes\ket{\phi}\bra{\phi}\mapsto\rho'\otimes\ket{\phi}\bra{\phi}$ under TIO is also not achievable for any choice of pure state $\ket{\phi}$.
\begin{proof}
From Proposition 2 in Ref. \cite{PhysRevA.94.052324}, every TIO can be implemented by coupling the system to an ancilla $E$ prepared in a symmetric state via a covariant unitary. If the transformation $\rho\otimes\ket{\phi}\bra{\phi}\mapsto\rho'\otimes\ket{\phi}\bra{\phi}$ under TIO is achievable, then we have
\begin{equation}
\rho'\otimes \ket{\phi}\bra{\phi}=\Tr_E[U(\rho\otimes\ket{\phi}\bra{\phi}\otimes\tau_E)U^\dagger],
\end{equation}
where $[\tau_E,H_E]=0$ and $[U,H_S+H_C+H_E]=0$. By taking a partial trace on $S$, we have $\Tr_{SE}[U(\rho\otimes\ket{\phi}\bra{\phi}\otimes\tau_E)U^\dagger]=\ket{\phi}\bra{\phi}$, and hence $U(\rho\otimes\ket{\phi}\bra{\phi}\otimes\tau_E)U^\dagger=\rho'_{SE}\otimes\ket{\phi}\bra{\phi}$. 
By Lemma \ref{lemma:mix}, a covariant unitary $V$ acting on $SE$ exists such that $V(\rho\otimes\tau_E)V^\dagger=\rho'_{SE}$. By taking a partial trace on $E$, we have $\rho'=\Tr_E(\rho'_{SE})=\Tr_E\left[V(\rho\otimes\tau_E)V^\dagger\right]$, which means that $\rho'$ can be prepared from $\rho$ via TIO.
\end{proof}
\end{theorem}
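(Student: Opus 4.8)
The plan is to prove the contrapositive: assuming the catalyzed transformation $\rho\otimes\ket{\phi}\bra{\phi}\mapsto\rho'\otimes\ket{\phi}\bra{\phi}$ is realizable by some TIO, I would show that $\rho\mapsto\rho'$ is already realizable by a TIO acting on $S$ alone. The first step is to dilate the catalyzed TIO into a covariant unitary: invoking the Stinespring-type representation of TIOs (every TIO is a covariant unitary on the system together with an ancilla $E$ prepared in a symmetric state $\tau_E$, followed by tracing out $E$), the hypothesis becomes the existence of a covariant unitary $U$ on $SCE$ with $[U,H_S+H_C+H_E]=0$ such that $\Tr_E[U(\rho\otimes\ket{\phi}\bra{\phi}\otimes\tau_E)U^\dagger]=\rho'\otimes\ket{\phi}\bra{\phi}$.

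Next I would exploit the purity of the catalyst to decouple it. Tracing the full output $U(\rho\otimes\ket{\phi}\bra{\phi}\otimes\tau_E)U^\dagger$ over both $S$ and $E$ returns the reduced state of $C$, which by hypothesis is exactly $\ket{\phi}\bra{\phi}$. Since this marginal is pure, the global output must factorize with $C$ in a product, i.e. $U(\rho\otimes\ket{\phi}\bra{\phi}\otimes\tau_E)U^\dagger=\rho'_{SE}\otimes\ket{\phi}\bra{\phi}$, where $\rho'_{SE}$ is the reduced state on $SE$ and satisfies $\Tr_E(\rho'_{SE})=\rho'$. This is the key conceptual move: a pure catalyst that is exactly preserved cannot end up correlated with the rest, so the problem collapses to an ordinary, uncatalyzed covariant-unitary equivalence in which $SE$ plays the role of the ``system'' and $C$ the pure ancilla.

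At this point the situation matches exactly the premise of the mixed-state lemma (Lemma \ref{lemma:mix}): a covariant unitary maps $(\rho\otimes\tau_E)\otimes\ket{\phi}\bra{\phi}$ to $\rho'_{SE}\otimes\ket{\phi}\bra{\phi}$. Applying that lemma with $SE$ as the system yields a covariant unitary $V$ on $SE$, commuting with $H_S+H_E$, with $V(\rho\otimes\tau_E)V^\dagger=\rho'_{SE}$. Tracing out $E$, whose input $\tau_E$ is symmetric and whose dynamics $V$ is covariant, then defines a genuine TIO on $S$ carrying $\rho$ to $\Tr_E(\rho'_{SE})=\rho'$, which completes the contrapositive.

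The main obstacle lies not in this assembly but in the content packaged into Lemma \ref{lemma:mix}, and ultimately Lemma \ref{lemma:set}. Writing $\rho=\sum_j p_j\ket{\psi_j}\bra{\psi_j}$ turns the factorized identity into a family of pure-state conversions $U(\ket{\psi_j}\otimes\ket{\phi})=\ket{\psi_j'}\otimes\ket{\phi}$, and the difficulty is to produce a \emph{single} covariant unitary $V$ on the system realizing all of them simultaneously. The Noether criterion (Lemma \ref{lemma:neother}) supplies a covariant $V_j$ for each $j$ individually, but stitching these into one $V$ is the crux: I would proceed by induction on the number of vectors, at each stage reducing to the single-vector case via a linear combination $c_a\ket{\psi_j}+c_b\ket{\psi_n}$ and using the freedom in $c_a,c_b$ to force the inner-product equalities $\bra{\tilde a_j^{(i)}}\ket{\tilde b^{(i)}}=\bra{\tilde a_j^{(i)}}V_b^{(i)}\ket{\tilde b^{(i)}}$ within each energy eigenspace, then invoking the norm-preserving extension argument (Lemma \ref{lemma:pre}) to build the common unitary. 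Controlling this eigenspace-by-eigenspace bookkeeping, and verifying that the induction closes, is where the real work and the potential for error concentrate.
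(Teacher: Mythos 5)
Your proposal is correct and follows essentially the same route as the paper's own proof: dilate the catalytic TIO to a covariant unitary on $SCE$ with a symmetric ancilla, use the purity of the preserved catalyst marginal to force the product structure $\rho'_{SE}\otimes\ket{\phi}\bra{\phi}$, apply Lemma \ref{lemma:mix} with $SE$ as the effective system, and trace out $E$ to obtain a TIO realizing $\rho\mapsto\rho'$. Your closing remarks on where the difficulty concentrates (the induction in Lemma \ref{lemma:set} via linear combinations, the eigenspace-by-eigenspace inner-product conditions, and the extension argument of Lemma \ref{lemma:pre}) also match the paper's appendix structure exactly.
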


\section{ASYMMETRY MONOTONES OF QUBIT STATES}\label{app:measure}
Here we briefly review several measures of asymmetry (which are monotonic under TIOs), and then compare their ordering for qubit states.

Consider a system with Hamiltonian $H$ and in state $\rho$. Generally, the measure of asymmetry is a function of both $\rho$ and $H$. In the regime with which we are concerned here, the Hamiltonian $H$ is fixed. Hence in the following, we express the measures of asymmetry as functions of $\rho$.

The quantum Fisher information \cite{clock2003} is defined as
\begin{equation}
I_F(\rho):=\Tr(\dot{\rho}\Delta_\rho^{-1}\dot{\rho}),
\end{equation}
where $\Delta_\rho B:=(\rho B+B\rho)/2$ and $\dot\rho=i[\rho,H]$. 
It quantifies the accuracy of a quantum clock specified by the pair $(\rho,H)$.
A related measure is the Wigner-Yanase skew information \cite{Marvian12}, which is defined as
\begin{equation}
    I_{WY}(\rho):=-\frac12\Tr\left([\rho^{\frac12},H]^2\right).
\end{equation}
It has been proved that both $I_F$ and $I_{WY}$ are additive on tensor products. Further, these two measures become equivalent for pure states, i.e., $I_{WY}(\ket{\psi}\bra{\psi})=\frac14I_F(\ket{\psi}\bra{\psi})=\bra{\psi}H^2\ket{\psi}-(\bra{\psi}H\ket{\psi})^2$.

Note that in the resource theory of asymmetry, the set of free states is convex. Hence, one can employ some general resource measures to quantify the amount of asymmetry, e.g., the robustness and the distance-based measure. The robustness of asymmetry \cite{PhysRevA.93.042107} is defined as
\begin{equation}
R(\rho):=\inf_{\gamma\in\mathcal D}\left\{s:\frac{\rho+s\gamma}{1+s}\in\mathcal F\right\},
\end{equation}
where $\mathcal D$ is the set of all states of the system and $\mathcal F=\{\rho:\rho=e^{-iHt}\rho e^{iHt}\}$ is the set of symmetric states. It is quantitatively related to the task of state discrimination. The distance-based measure is defined as the minimum distance from state $\rho$ to the set of symmetric states $\mathcal{F}$. Formally, let $D(\cdot||\cdot)$ be a distance measure of states, and the distance-based measure of asymmetry is defined as $A_D(\rho):=\min_{\xi\in\mathcal{F}}D(\rho||\xi)$. When the distance measure $D(\rho||\xi)$ is chosen to be the relative entropy $S(\rho||\xi):=\Tr(\rho\log_2\rho-\rho\log_2\xi)$, the corresponding asymmetry measure is called the relative entropy of asymmetry. It has been shown that the relative entropy of asymmetry can be expressed as \cite{Marvian2014}
\begin{equation}
    A_r(\rho)=S[\Pi(\rho)]-S(\rho),
\end{equation}
where $S(\rho):=-\Tr(\rho\log_2\rho)$ is the von Neumann entropy, and $\Pi(\rho)\equiv\sum_j\Pi_j\rho\Pi_j$ with $\Pi_j$ the projection to the $j$-th eigenspace of $H$.

When the system under consideration is a qubit with fixed Hamiltonian $H=\frac{\Delta}{2}\sigma^z$, its asymmetry is relevant to the quantum coherence between the two eigenstates of $H$. In general, a qubit state can be expressed in the Bloch representation as $\rho=\frac{\iden}{2}+\frac{r}{2}\hat{r}\cdot\vec{\sigma}$, where $r\in[0,1]$, $\vec{\sigma}=(\sigma^x,\sigma^y,\sigma^z)$, and $\hat{r}=(\sin\theta\cos\phi,\sin\theta\sin\phi,\cos\theta)$ is a normalized three-dimensional real vector. Direct calculations lead to the results
\begin{eqnarray}
R(\rho)&=&r\sin\theta,\\
I_F(\rho)&=&\Delta^2(r\sin\theta)^2=\Delta^2[R(\rho)]^2,\\
I_{WY}(\rho)&=&\frac{\Delta^2}{4}(1-\sqrt{1-r^2})\sin^2\theta\nonumber\\
&=&\frac{\Delta^2}{4}\frac{[R(\rho)]^2}{1+\sqrt{1-r^2}},\\
A_r(\rho)&=&h(r\cos\theta)-h(r),
\end{eqnarray}
where the function $h(x)\equiv-\frac{1+x}{2}\log_2\frac{1+x}{2}-\frac{1-x}{2}\log_2\frac{1-x}{2}$. 
From these results, we have the following observations.

\emph{Observation 1:} Orderings of states.
Let $\rho_1$ and $\rho_2$ be two qubit states. Then $I_F(\rho_1)\geq I_F(\rho_2)$ is equivalent to $R(\rho_1)\geq R(\rho_2)$. However, it is possible that $I_{WY}(\rho_1)<I_{WY}(\rho_2)$ and/or $A_r(\rho_1)<A_r(\rho_2)$.
This means that, for qubit states with fixed Hamiltonians, the quantum Fisher information $I_F$ and the robustness $R$ give the same ordering of states, while the Wigner-Yanase skew information $I_{WY}$ and the relative entropy of asymmetry $A_r$ give other orderings of states. Therefore, the monotonicity of any measure is a necessary but not sufficient condition for state transformations under TIOs.

\emph{Observation 2:} The maximally asymmetric states.
All of the measures discussed above reach maximum for the set of states $\{\rho|\rho=\frac{1}{2}(\iden+\cos\phi\sigma^x+\sin\phi\sigma^y),\phi\in[0,2\pi)\}$, which we call the maximally asymmetric states. Note that each maximally asymmetric state can be obtained from the state $\rho^+\equiv\frac12(\iden+\sigma^x)$ by a covariant unitary operation.
Also note that the TIO cone of $\rho^+$ does \emph{not} include all the qubit states.

\section{MODES OF COHERENCE AND A GENERAL FORM OF TIOS}\label{app:mode}

Consider a system with Hamiltonian $H=\sum_jE_j\ket{j}\bra{ j}$.
For a quantum state $\rho$ expanded in its energy eigenbasis $\rho=\sum_{i,j}\rho_{ij}\ket{i}\bra{ j}$, a mode of coherence \cite{PhysRevX.5.021001} is defined as \cite{PhysRevA.90.062110}
\begin{equation}
    \rho^{(\delta)}:=\sum_{i,j:E_i-E_j=\delta}\rho_{ij}\ket{i}\bra{ j}.
\end{equation}
Here we define matrices
\begin{equation}\label{eq:pdelta}
    P^{(\delta)}:=\sum_{i,j:E_i-E_j=\delta}\ket{i}\bra{ j},
\end{equation}
and then the modes of coherence can be written as
\begin{equation}
    \rho^{(\delta)}=\rho\odot P^{(\delta)},
\end{equation}
where the label $\odot$ denotes the Hadamard product, i.e., the entrywise matrix product.

Let $\mathcal U_t(\cdot):=e^{-iHt}\cdot e^{iHt}$ denote the free evolution of the system under its Hamiltonian $H$, and we can directly check that
\begin{equation}
    \mathcal{U}_t(\rho^{(\delta)})=e^{-i\delta t}\rho^{(\delta)},
\end{equation}
and then,
\begin{eqnarray}
\mathcal U_t(\rho)&=&\sum_\delta\mathcal U_t(\rho^{(\delta)})\nonumber\\
&=&\sum_\delta e^{-i\delta t}\rho^{(\delta)}\nonumber\\
&=&\sum_\delta e^{-i\delta t}P^{(\delta)}\odot\rho\nonumber\\
&=& T_t\odot\rho,
\end{eqnarray}
where $T_t\equiv\sum_\delta e^{-i\delta t}P^{(\delta)}$.

A TIO operation $\cptp$ satisfies $\cptp\circ\mathcal{U}_t=\mathcal{U}_t\circ\cptp$, which is equivalent to
\begin{equation}
    \sum_\delta e^{-i\delta t}\cptp(P^{(\delta)}\odot\rho)=\sum_\delta e^{-i\delta t}P^{(\delta)}\odot\cptp(\rho),\forall\rho,t.
\end{equation}
Then we have
\begin{equation}\label{eq:commu}
    \cptp\left(P^{(\delta)}\odot\rho\right)=P^{(\delta)}\odot\cptp(\rho),\ \forall\rho.
\end{equation}
This means that, by a TIO operation, each mode in the initial state is independently mapped to the corresponding mode of the final state.

The Choi–Jamiołkowski matrix of operation $\mathcal{E}$ is defined as
\begin{equation}
    J_\cptp=\left(\begin{array}{ccccc}
    \cptp(\ket{0}\bra{0}) & \cdots & \cptp(\ket{0}\bra{j}) & \cdots &\cptp(\ket{0}\bra{d})\\
    \vdots & \ddots & \vdots & \ddots & \vdots \\
    \cptp(\ket{i}\bra{0}) &\cdots & \cptp(\ket{i}\bra{j})  & \cdots &\cptp(\ket{i}\bra{d}) \\
    \vdots & \ddots & \vdots & \ddots & \vdots \\
    \cptp(\ket{d}\bra{0}) &\cdots & \cptp(\ket{d}\bra{j})  & \cdots &\cptp(\ket{d}\bra{d})
    \end{array}
    \right).
\end{equation}
When $\cptp$ is a TIO, then we have
\begin{equation}
    \cptp(\ket{i}\bra{ j})=\cptp(P^{(\delta_{ij})}\odot\ket{i}\bra{ j})=P^{(\delta_{ij})}\odot\cptp\left(\ket{i}\bra{j}\right),
\end{equation}
where $\delta_{ij}=E_i-E_j$. Here the first equation is from the definition of $P^{(\delta)}$, and the second equation is from Eq. (\ref{eq:commu}). Now we define a matrix
\begin{equation}\label{eq:p_matrix}
    P:
    =\left(\begin{array}{ccccc}
    P^{(\delta_{00})} & \cdots & P^{(\delta_{0j})} & \cdots & P^{(\delta_{0d})}\\
    \vdots & \ddots & \vdots & \ddots & \vdots \\
    P^{(\delta_{i0})} &\cdots & P^{(\delta_{ij})}  & \cdots &P^{(\delta_{id})} \\
    \vdots & \ddots & \vdots & \ddots & \vdots \\
    P^{(\delta_{d0})} &\cdots & P^{(\delta_{dj})}  & \cdots &P^{(\delta_{dd})}
    \end{array}
    \right).
\end{equation}
Then the Choi–Jamiołkowski matrix of a TIO operation $\cptp$ satisfies
\begin{equation}\label{eq:cj_tio}
    J_\cptp=J_\cptp\odot P.
\end{equation}
This is the general form of a TIO operation.

As an example, we consider a qubit system with Hamiltonian $H=\frac{\Delta}{2}\sigma^z$. The matrix $P$ defined in Eq. (\ref{eq:p_matrix}) reads
\begin{equation}
    P:=\left(\begin{array}{cc}
    P^{(0)} & P^{(\Delta)}  \\
    P^{(-\Delta)} & P^{(0)} 
    \end{array}
    \right)
\end{equation}
with $P^{(0)}=\iden$ and $P^{(\pm\Delta)}=\frac12(\sigma^x\pm i\sigma^y)$. Then from Eq. (\ref{eq:cj_tio}), the Choi–Jamiołkowski matrix of a qubit TIO is generally written as
\begin{equation}\label{eq:cj_qubit}
    J_\cptp=\left(\begin{array}{cccc}
    p_0 & 0 & 0 & \gamma \\
    0 & 1-p_0 & 0 & 0 \\
    0 & 0 & 1-p_1 & 0 \\
    \gamma^* & 0 & 0 & p_1
    \end{array}
    \right),
\end{equation}
where the parameters satisfy $p_0,p_1\in[0,1]$ and $|\gamma|\leq\sqrt{p_0 p_1}$, such that $J_\cptp$ is positive.

\section{TIO CONE AND CCTIO CONE OF A QUBIT STATE}\label{app:cone}

In the Bloch presentation, a qubit state is generally written as $\rho(\vec r)=\frac12(\iden+\vec{r}\cdot\vec{\sigma})$, where $\vec{r}$ is a three-dimensional real vector with $\left|\vec{r}\right|\leq1$ and $\vec{\sigma}=(\sigma^x,\sigma^y,\sigma^z)$. 
The basic structure of its TIO cone or CCTIO cone is that it is rotationally symmetric about the $z$ axis. 
This is because any set of states which are rotationally symmetric about the $z$ axis are equivalent by covariant unitary operators $U(\phi)=\diag(1,e^{i\phi})$.

Let the Bloch vector $\vec{r}=(\eta\cos\phi,\eta\sin\phi,z)$ with $\eta\in[0,1],\phi\in[0,2\pi),z\in[-1,1]$. The TIO cone of $\rho(\vec{r})$ is written as
\begin{widetext}
\begin{eqnarray}\label{eq:tiocone_qubit}
\tiocone[\rho(\vec{r})]&=&\bigg\{\rho':\rho'=\frac12(\iden+\eta'\cos\phi'\sigma^x+\eta'\sin\phi'\sigma^y+z'\sigma^z),\nonumber\\
&&\quad\quad 0\leq\eta'\leq\min\left\{\eta\sqrt{\frac{1+z'}{1+z}},\eta\sqrt{\frac{1-z'}{1-z}}\right\}, z'\in[-1,1], \phi'\in[0,2\pi)\bigg\}.
\end{eqnarray}
\end{widetext}
The reason is as follows. After the action of a TIO in the form of Eq. (\ref{eq:cj_qubit}), the qubit state $\rho(\vec{r})$ becomes
\begin{equation}
    \rho'=\frac12\left(\begin{array}{cc}
    1+z' & \eta'e^{-i\phi'}  \\
    \eta'e^{i\phi'} & 1-z' 
    \end{array}
    \right)
\end{equation}
with
\begin{eqnarray}
\label{eq:z} z'  =  p_0(1+z)-p_1&(1-z)-z,\\
\label{eq:eta} \eta'e^{i\phi'} = \gamma\eta e^{i\phi}&.
\end{eqnarray}
From Eq. (\ref{eq:z}) and $p_0,p_1\in[0,1]$, we have
\begin{eqnarray}
    p_0&\in&[0,1]\cap\left[\frac{z+z'}{1+z},\frac{1+z'}{1+z}\right],\nonumber\\
\label{eq:p0p1}    p_1&=&\frac{(1+z)p_0-(z+z')}{1-z}.
\end{eqnarray}
From Eq. (\ref{eq:eta}) and $|\gamma|\leq\sqrt{p_0 p_1}$, we have
\begin{eqnarray}
\eta'&=&|\gamma|\eta\leq\eta\sqrt{p_0p_1}\nonumber\\
& \leq & \eta\cdot\min\left\{\sqrt{\frac{1+z'}{1+z}},\sqrt{\frac{1-z'}{1-z}}\right\}.
\end{eqnarray}
Here the last inequality is from Eq. (\ref{eq:p0p1}). 
The extreme states, for which the above equality holds, are obtained with $|\gamma|=\sqrt{p_0p_1}$, $p_1$ as in Eq. (\ref{eq:p0p1}), and
\begin{equation}
    p_0=\min\left\{1,\frac{1+z^\prime}{1+z}\right\}
\end{equation}
From the rotational symmetry and the convexity of TIO cone $\tiocone[\rho(\vec{r})]$, we arrive at Eq. (\ref{eq:tiocone_qubit}). 
This completes the proof.

Next, we prove the following statement.
For state $\rho(\eta)=\frac12(\iden+\eta\sigma^x)$, the CCTIO cone $\cccone^{(d)}$ is symmetric about the $xy$ plane.
The reason is as follows. 
Suppose the $\rho^\uparrow=\frac12(\iden+r_x\sigma_x+r_z\sigma_z)\in\cccone^{(d)}[\rho(\eta)]$, namely, a covariant operation $\cptp^\uparrow$ and a $d$-dimensional catalyst in state $\sigma^\uparrow$ exist such that $\cptp^\uparrow[\rho(\eta)\otimes\sigma^\uparrow]=\rho^\uparrow|\sigma^\uparrow$. 
Let $\sigma^\downarrow=U^x\sigma^\uparrow U^x$ and $\cptp^\downarrow=\mathcal U_x\circ\cptp^\uparrow\circ\mathcal U_x$ with $\mathcal U^x(\cdot)\equiv \sigma^x\otimes U^x(\cdot)\sigma^x\otimes U^x$. Here the unitary operator $U^x$ reverses the energy levels of the catalyst $C$, i.e., it is anti-diagonal on the eigenbasis of $H_C$ with each non-zero entry equal to 1. 
It is directly checked that $\cptp^\downarrow\in\tio$ and $\cptp^\downarrow[\rho(\eta)\otimes\sigma^\downarrow]=\rho^\downarrow|\sigma^\downarrow$ with $\rho^\downarrow=\frac12(\iden+r_x\sigma_x-r_z\sigma_z)$, i.e., $\rho^\downarrow\in\cccone^{(d)}[\rho(\eta)]$.

\section{NUMERICAL METHOD ON EVALUATING CCTIO CONE OF QUBIT STATES}\label{app:numerical}
Here we set the input states and target states as $\rho_S$ and $\rho^\prime_S$, respectively.
In the Bloch representation, the extreme states in the CCTIO cone are defined as those with the maximum distance from the $z$ axis for a given $r'_z\equiv\Tr(\sigma^z\rho'_S)$.
Since any state in the cone can be achieved by applying a dephasing map (which is a TIO) on an extreme state, it is sufficient to solve the extreme states to obtain the whole cone.
Because the CCTIO cone of a qubit system is symmetric about the $z$ axis, we only need to solve the extreme states within the $xz$ plane with $x\geq0$.
Our problem then becomes the optimization task
\begin{equation}\label{eq:fixedupper}
        R_{\mathrm{CC}}^{(d)}(\rho_S,r^\prime_z) = \max\limits_{\sigma_C\in D(\mathcal{H}^{(d)}_C)}R_{\mathrm{CC}}(\sigma_C;\rho_S,r^\prime_z),
    \end{equation}
where $D(\mathcal{H}^{(d)}_C)$ is the set of $d$-dimensional density matrices, and
the function $R_{\mathrm{CC}}(\sigma_C;\rho_S,r^\prime_z)$ is defined as 
    \begin{equation}\label{eq:fixedlower}
        \begin{aligned}
        R_{\mathrm{CC}}(\sigma_C&;\rho_S,r^\prime_z) = \max_{\mathcal{E}\in\tio}\Tr[\sigma^x\rho'_S]\\
        \mathrm{s.t.}\quad &\rho^\prime_S = \Tr_C\left[\mathcal{E}(\rho_S\otimes\sigma_C)\right],\\
        &\sigma_C \equiv \Tr_S\left[\mathcal{E}(\rho_S\otimes\sigma_C)\right],\\
            &r^\prime_z \equiv  \Tr\left[\sigma^z\rho^\prime_S\right].
        \end{aligned}
    \end{equation}

Clearly, the function $R_{\mathrm{CC}}(\rho_S,r^\prime_z)$ embeds a lower-level optimization, Eq. (\ref{eq:fixedlower}), into an upper-level optimization,  Eq. (\ref{eq:fixedupper}). 
This optimization task is called bi-level optimization \cite{bilevelopt}, and is generally hard to be solved. 
Here we first consider the lower level of optimization, and then describe the methods for solving Eq. (\ref{eq:fixedupper}) for $d=2,3$.

The lower-level optimization as in Eq. (\ref{eq:fixedlower}) can be reformulated as a semidefinite programming (SDP) task, which allows us to effectively solve it in polynomial time via interior point methods \cite{boyd2004convex}.
Here, we derive the explicit SDP form in the following.
Let $H_S=\frac{\Delta}{2}\sigma^z$ be the Hamiltonian of $S$ and $H_C=\sum_{l=0}^{d-1}l\Delta\ket{l}\bra{ l}$ be the Hamiltonian of $C$. The eigenbasis of the total Hamiltonian $H_{SC}=H_S+H_C$ is labeled $\left\{\ket{\psi_j}\right\}^{2d-1}_{j=0}$, and the eigenvalue of each $\ket{\psi_j}$ is denoted $E_j$.
    Thus, the explicit form of matrix $P$ in Eq. (\ref{eq:cj_tio}) is written as
    \begin{equation}
        P = \sum_{jk}\ket{\psi_j}\bra{\psi_k}\otimes P^{(m_{jk}\Delta)},
    \end{equation}
    where $P^{(m_{jk}\Delta)}$ is defined in Eq. (\ref{eq:pdelta}) with $m_{jk}\Delta \equiv E_j-E_k$.
    Note that $m_{jk}$ are integers and satisfy $-d\leq m_{jk}\leq d$.
    Then, by setting the optimization variable as the Choi–Jamiołkowski matrix $J_{\mathcal{E}}$ of TIO $\mathcal{E}$, we arrive at the SDP form of Eq. (\ref{eq:fixedlower}) as
    \begin{equation}\label{eq:semide}
        \begin{aligned}
        \max\limits_{J_\mathcal{E}}\quad&\Tr\left[\sigma^x\rho^\prime_S\right],\\
        \mathrm{s.t.}\quad & J_\mathcal{E}\geq 0,\quad\textnormal{(CP condition)},\\
        & \Tr_{S^\prime C^\prime}\left[J_{\mathcal{E}}\right] = \iden_{SC},\quad\textnormal{(TP condition)},\\
        & J_{\mathcal{E}}\odot P = J_{\mathcal{E}},\quad\textnormal{(TIO condition)},\\
        & \rho^\prime_{SC}=\Tr_{SC}\left[\left(\rho_S\otimes\sigma_C\otimes\iden_{S^\prime C^\prime}\right)^\mathrm{T}\cdot J_{\mathcal{E}}\right],\\
        & \sigma_C \equiv \Tr_{S^\prime}\left[\rho^\prime_{SC}\right], \quad\textnormal{(CC condition)} ,\\
        & \rho^\prime_S = \Tr_{C^\prime}\left[\rho^\prime_{SC}\right],\quad r^\prime_z\equiv\Tr(\sigma^z\rho^\prime_S),
        \end{aligned}
    \end{equation}
    where the CC condition denotes the correlating-catalyst condition, $S^\prime C^\prime$ is the output space of $SC$, and the total target states $\rho^\prime_{SC}$ follow the definition of the Choi–Jamiołkowski matrix.
    In practice, we use the CVX package \cite{cvx} to numerically solve this SDP task with tolerance at $1.81\times 10^{-12}$.

For the upper part of optimization, the property of $R_{\mathrm{CC}}(\sigma_C;\rho_S,r^\prime_z)$ is essential. 
In the following, we prove the continuity of $R_{\mathrm{CC}}(\sigma_C;\rho_S,r^\prime_z)$ on $\sigma_C$, 
which allows us to find the optimizer of $R_{\mathrm{CC}}(\rho_S,r^\prime_z)$ over $D(\mathcal{H}_C)$ by sampling.

\begin{lemma}[$ R_{\mathrm{CC}}(\sigma_C;\rho_S,r^\prime_z)$ has Lipschitz continuity on $\sigma_C$.]\label{lm:ac}
    For any pair of catalytic states $\sigma_C$ and $\sigma^{\epsilon}_C$ satisfying $\|\sigma_C-\sigma^{\epsilon}_C\|_1\leq\epsilon$, we have
    \begin{equation}\label{eq:repsilon}
        \left| R_{\mathrm{CC}}(\sigma^\epsilon_C;\rho_S,r^\prime_z)- R_{\mathrm{CC}}(\sigma_C;\rho_S,r^\prime_z)\right|\leq 4\epsilon(1+\epsilon).
    \end{equation}
    Here $\|\cdot\|_1$ denotes the trace-norm of states.
\end{lemma}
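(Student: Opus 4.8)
The plan is to prove the two one-sided estimates $R_{\mathrm{CC}}(\sigma^\epsilon_C;\rho_S,r'_z)\le R_{\mathrm{CC}}(\sigma_C;\rho_S,r'_z)+4\epsilon(1+\epsilon)$ and the same with $\sigma_C$ and $\sigma^\epsilon_C$ interchanged; since the hypothesis $\|\sigma_C-\sigma^\epsilon_C\|_1\le\epsilon$ is symmetric in the two catalysts, it suffices to establish one of them and then relabel. For each estimate I would take a channel that is optimal---hence exactly catalyst-preserving---for one catalyst and convert it into a \emph{feasible} channel for the other, and then bound how much the objective $\Tr[\sigma^x\rho'_S]$ can drop in the conversion.

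First I would feed the perturbed input into the optimal channel. Let $\cptp^\star$ attain $R_{\mathrm{CC}}(\sigma_C;\rho_S,r'_z)$, so that $\Lambda(\sigma_C)=\sigma_C$ where $\Lambda(\cdot)\equiv\Tr_S[\cptp^\star(\rho_S\otimes\,\cdot\,)]$. Replacing $\sigma_C$ by $\sigma^\epsilon_C$ at the input, both reduced outputs move only slightly: because $\cptp^\star$ and the partial traces are trace-norm contractive and $\|\rho_S\otimes(\sigma^\epsilon_C-\sigma_C)\|_1=\|\sigma^\epsilon_C-\sigma_C\|_1\le\epsilon$, the system output changes by at most $\epsilon$ in trace norm, so both $\Tr[\sigma^x\rho'_S]$ and the fixed quantity $r'_z=\Tr[\sigma^z\rho'_S]$ shift by at most $\epsilon$ (using $\|\sigma^{x,z}\|_\infty=1$). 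The reduced state of $C$, however, becomes $\Lambda(\sigma^\epsilon_C)$, which differs from $\sigma^\epsilon_C$ by at most $2\epsilon$ in trace norm by the triangle inequality together with $\Lambda(\sigma_C)=\sigma_C$ and contractivity of $\Lambda$. Thus $\cptp^\star$ is only \emph{approximately} feasible for $\sigma^\epsilon_C$.

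The crux is to upgrade this approximately feasible map to an exactly feasible one while paying only $O(\epsilon)$ in the objective. The difficulty is structural: the $C$-marginal must be corrected from $\Lambda(\sigma^\epsilon_C)$ back to $\sigma^\epsilon_C$, and this cannot be done by acting on $C$ alone, since a covariant map on $C$ can never raise its asymmetry. I would therefore post-compose with a \emph{joint} TIO on $SC$ that lends an $O(\epsilon)$ amount of coherence from $S$ to $C$---the same transfer mechanism that drives the main protocol---tuned so that the output $C$-marginal is exactly $\sigma^\epsilon_C$, while the residual $p_0,p_1$ freedom of a qubit TIO [Eq.~\eqref{eq:cj_qubit}] is used to reset the system's $z$-component to the prescribed $r'_z$. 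Since the amount of asymmetry that has to be moved is controlled by $\|\Lambda(\sigma^\epsilon_C)-\sigma^\epsilon_C\|_1\le2\epsilon$, the objective degrades only by $O(\epsilon)$, and the composite map is a TIO that now preserves $\sigma^\epsilon_C$ exactly, hence is admissible in the optimization defining $R_{\mathrm{CC}}(\sigma^\epsilon_C;\rho_S,r'_z)$. An equivalent route to exact preservation is the Ces\`aro fixed-point average $\lim_n\frac1n\sum_j\Lambda^{\,j}$ used to build the catalyst $\sigma$ in the main text, which furnishes an exactly preserved state within $2\epsilon$ of $\sigma^\epsilon_C$; the extra work is then to transfer the guarantee from that nearby fixed point to $\sigma^\epsilon_C$ itself.

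Finally I would collect the contributions: the $\le\epsilon$ drift from swapping the input, together with the $O(\epsilon)$ cost of the coherence-lending correction, gives the linear term, while I expect the quadratic $\epsilon^2$ to enter through a robustness-type domination $\sigma^\epsilon_C\le(1+t)\,\omega$ with $t\equiv\Tr[(\sigma^\epsilon_C-\sigma_C)_+]\le\epsilon$ and $\omega$ a density operator close to $\sigma_C$, which rescales the error by the factor $(1+t)\le(1+\epsilon)$ when the corrected, slightly sub-normalized output is renormalized to a state; bookkeeping the constants then yields $4\epsilon(1+\epsilon)$, and relabeling $\sigma_C\leftrightarrow\sigma^\epsilon_C$ supplies the matching bound. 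I expect the restoration of exact feasibility (the coherence-lending correction, or equivalently the transfer from the Ces\`aro fixed point) to be the main obstacle, since it is where the exact catalytic constraint, the asymmetry-monotonicity obstruction, and the $r'_z$ constraint must be reconciled at once.
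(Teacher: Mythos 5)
Your overall scaffolding (prove one one-sided bound and relabel, perturb the input of an optimal channel, track the $O(\epsilon)$ drift of both marginals) matches the opening of the paper's proof, but the step you yourself flag as the crux --- upgrading the approximately feasible map to an exactly feasible \emph{TIO} by post-composing with a joint ``coherence-lending'' covariant operation tuned so that the $C$-marginal comes out exactly $\sigma^\epsilon_C$, at $O(\epsilon)$ cost to the objective --- is a genuine gap, not a technicality. You give no construction of this correcting TIO and no bound on its cost, and neither is available cheaply: deciding which \emph{exact} marginals are reachable by a covariant map is precisely the CCTIO-cone problem the whole paper studies, and the no-broadcasting obstruction you invoke shows that trace-norm proximity of marginals does not by itself imply the existence of a covariant correction, let alone one whose effect on the $S$-marginal is linear in $\epsilon$ uniformly over the (arbitrary, optimizer-dependent) output states it must act on. Your fallback via the Ces\`aro average is circular for the same reason: it produces an exactly preserved state \emph{near} $\sigma^\epsilon_C$, and ``transferring the guarantee'' from that nearby fixed point to $\sigma^\epsilon_C$ itself is exactly the continuity statement being proved. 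The constant $4\epsilon(1+\epsilon)$ is likewise conjectured rather than derived.

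The paper sidesteps the repair problem entirely, and this is the idea your proposal is missing: it never builds an exactly feasible TIO for the other catalyst. Starting from the optimizer $\mathcal{E}^\star$ for $\sigma^\epsilon_C$, it sandwiches $\mathcal{E}^\star$ between \emph{constant-shift} maps on $C$, $\mathcal{N}_0=\mathcal{I}+\mathcal{M}_0$ and $\mathcal{N}_1=\mathcal{I}+\mathcal{M}_1$ with $\mathcal{M}_0(\cdot)=\sigma^\epsilon_C-\sigma_C$ and $\mathcal{M}_1(\cdot)=\sigma_C-\sigma^\epsilon_C$. The resulting map $\mathcal{N}$ is merely trace preserving --- it is neither completely positive nor covariant --- but it satisfies all the constraints for $\sigma_C$ \emph{exactly}, reproduces the objective value $\Tr[\sigma^x\rho^\star_S]$, and lies within $2\epsilon(1+\epsilon)$ in diamond norm of a TIO. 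The tension between exactness and covariance is then resolved at the level of the optimization rather than of the map: the paper introduces a relaxed quantity $R^{\epsilon}_{\mathrm{CC}}$, defined by optimizing over TP maps that meet the exact marginal and $r'_z$ constraints while lying within diamond distance $\epsilon$ of the feasible TIO set, and uses contractivity (same input, partial trace) to show that any such map's objective exceeds that of a truly feasible TIO by at most $2\epsilon$ times a constant; chaining $R_{\mathrm{CC}}(\sigma^\epsilon_C;\rho_S,r'_z)\le R^{2\epsilon(1+\epsilon)}_{\mathrm{CC}}(\sigma_C;\rho_S,r'_z)\le R_{\mathrm{CC}}(\sigma_C;\rho_S,r'_z)+4\epsilon(1+\epsilon)$ finishes the proof. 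To salvage your route you would need a lemma of the form ``an $\epsilon$-mismatch in the catalyst marginal can always be corrected exactly by a joint TIO at $O(\epsilon)$ cost to the system marginal,'' which is stronger than, and nowhere implied by, the results of the paper.
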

\begin{proof}
    For convenience, we set 
    \begin{equation}
         R_{\mathrm{CC}}(\sigma^\epsilon_C;\rho_S,r^\prime_z)\geq R_{\mathrm{CC}}(\sigma_C;\rho_S,r^\prime_z).
    \end{equation}
    One optimizer of $ R_{\mathrm{CC}}(\sigma^\epsilon_C;\rho_S,r^\prime_z)$ is denoted by $\mathcal{E}^\star$,
    and the corresponding output state of $S$ is $\rho^{\star}_S$.

The proof is sketched as follows. First, we construct a trace preserving (TP) map $\mathcal{N}$, which is close to $\mathcal{E}^\star$, and satisfies
\begin{equation}\label{eq:N_map}
    \mathcal{N}(\rho_S\otimes\sigma_C) = \rho^\star_S|\sigma_C.
\end{equation}
Second, we slightly extend the set of free operations which lead to an upper bound of $R_{\mathrm{CC}}(\sigma_C^\epsilon;\rho_S,r^\prime_z)$.
Finally, we prove that the difference between $R_{\mathrm{CC}}(\sigma_C;\rho_S,r^\prime_z)$ and the upper bound of $R_{\mathrm{CC}}(\sigma_C^\epsilon;\rho_S,r^\prime_z)$ is no larger than $4\epsilon(1+\epsilon)$.
    
Let us construct a TP map (which need not be completely positive) as
    \begin{equation}\label{eq:N_def}
            \mathcal{N}=\mathcal{I}\otimes\mathcal{N}_1\circ\mathcal{E}^\star\circ\mathcal{I}\otimes\mathcal{N}_0.
    \end{equation}
    Here $\mathcal{I}$ denotes the identity map, $\mathcal{N}_0 := \mathcal{I}+\mathcal{M}_0$, and $\mathcal{N}_1 := \mathcal{I} + \mathcal{M}_1$,
    where $\mathcal{M}_0$ and $\mathcal{M}_1$ are constant maps defined as
    \begin{equation}
    \mathcal{M}_0(\cdot) = \sigma_C^\epsilon-\sigma_C,\quad\mathcal{M}_1(\cdot) = \sigma_C-\sigma_C^\epsilon.
    \end{equation}
     It is easy to check that $\mathcal{N}$ satisfies Eq. (\ref{eq:N_map}), and that
    \begin{equation}
        \|\mathcal{I}-\mathcal{N}_0\|_{\diamond}\leq \epsilon,\quad \|\mathcal{I}-\mathcal{N}_1\|_{\diamond}\leq \epsilon,
    \end{equation} 
    where $\|\cdot\|_{\diamond}$ is the diamond norm \cite{Aharonov1998}. Then, we examine the distance between $\mathcal{N}$ and $\mathcal{E}^\star$ as
    \begin{eqnarray}\label{eq:NE_diam}
            \|\mathcal{N}-\mathcal{E}^\star\|_{\diamond}&\leq&\|\mathcal{I}\otimes\mathcal{M}_1\circ\mathcal{E}^\star\|_{\diamond} + \|\mathcal{E}^\star\circ\mathcal{I}\otimes\mathcal{M}_0\|_{\diamond}\nonumber\\
            &&+\|\mathcal{I}\otimes\mathcal{M}_1\circ\mathcal{E}^\star\circ\mathcal{I}\otimes\mathcal{M}_0\|_{\diamond}\nonumber\\
            &\leq& 2\epsilon(1+\epsilon).
    \end{eqnarray}
Now we define the set of the allowed bipartite operations for given $\rho_S$, $\sigma_C$, and $r'_z$ as
\begin{eqnarray}
   \mathcal{O}_{(\sigma_C;\rho_S,r'_z)}&=\big\{\cptp:\cptp\in\tio,\Tr_S\left[\cptp(\rho_S\otimes\sigma_C)\right]=\sigma_C,\nonumber\\
    & \Tr\left[\sigma^z\Tr_C\left(\cptp(\rho_S\otimes\sigma_C)\right)\right]=r'_z\big\}.
\end{eqnarray}
Then Eq. (\ref{eq:NE_diam}) implies that the TP map $\mathcal{N}$ is $2\epsilon(1+\epsilon)$-close to the set $\mathcal{O}_{(\sigma_C;\rho_S,r'_z)}$.

Next, we define a set of TP maps as
\begin{equation}\label{eq:tio_epsilon}
\begin{aligned}
        \mathcal{O}^{\epsilon}_{(\sigma_C;\rho_S,r'_z)} 
        = &\big\{\mathcal{E}^{\prime}:\inf\limits_{\mathcal{E}\in\mathcal{O}_{(\sigma_C;\rho_S,r'_z)}}\|\mathcal{E}-\mathcal{E}^\prime\|_{\diamond}\leq \epsilon,\\
        &\ \mathcal{E}^\prime\in\mathrm{TP},\ \Tr_S\left[\cptp'(\rho_S\otimes\sigma_C)\right]=\sigma_C,\\  &\Tr\left[\sigma^z\Tr_C\left(\cptp'(\rho_S\otimes\sigma_C)\right)\right]=r'_z\big\},
\end{aligned}
\end{equation}
and a function as
\begin{equation}
        \begin{aligned}
            R^{\epsilon}_{\mathrm{CC}}(\sigma_C;\rho_S,r^\prime_z)&=\max\limits_{\mathcal{E}^\prime\in\mathcal{O}^{\epsilon}_{(\sigma_C;\rho_S,r'_z)}} \Tr[\sigma^x\rho^{\prime\prime}_S],\\
                \mathrm{s.t.}\quad \rho^{\prime\prime}_S &= \Tr_C\left[\mathcal{E}^\prime(\rho_S\otimes\sigma_C)\right],\\
                \rho^{\prime\prime}_S &\textnormal{ is positive semi-definite,}\\
                \sigma_C &\equiv \Tr_S\left[\mathcal{E}^\prime(\rho_S\otimes\sigma_C)\right],\\
                r^\prime_z&\equiv\Tr(\sigma^z\rho^{\prime\prime}_S).
        \end{aligned}
\end{equation} 
Because $\mathcal{N}\in\mathcal{O}^{2\epsilon(1+\epsilon)}_{(\sigma_C;\rho_S,r'_z)}$, we have
    \begin{equation}\label{eq:2e+es}
        \begin{aligned}
             R_{\mathrm{CC}}(\sigma_C^\epsilon;\rho_S,r^\prime_z) &= \Tr[\sigma^x\rho_S^\star]\\
             &=\Tr\left[\sigma^x\Tr_C\left[\mathcal{N}\left(\rho_S\otimes\sigma_C\right)\right]\right]\\
            &\leq R^{2\epsilon(1+\epsilon)}_{\mathrm{CC}}(\sigma_C;\rho_S,r^\prime_z).
        \end{aligned}
    \end{equation} 
    
Then we turn to the difference between  $R_{\mathrm{CC}}(\sigma_C;\rho_S,r^\prime_z)$ and $R^{2\epsilon(1+\epsilon)}_{\mathrm{CC}}(\sigma_C;\rho_S,r^\prime_z)$.
From Eq. (\ref{eq:tio_epsilon}), for any state $\rho''_S$ which can be obtained as $\rho''_S=\Tr_C[\cptp'(\rho_S\otimes\sigma_C)]$ with $\cptp'\in\mathcal{O}^{2\epsilon(1+\epsilon)}_{(\sigma_C;\rho_S,r'_z)}$, there exists an operation $\cptp\in\mathcal{O}_{(\sigma_C;\rho_S,r'_z)}$, such that $\cptp(\rho_S\otimes\sigma_C)=\rho'_S|\sigma_C$ and
    \begin{equation}\label{eq:ac_distance}
        \|\rho^{\prime}_S-\rho^{\prime\prime}_S\|_1\leq 2\epsilon(1+\epsilon),\ 
    \end{equation}
It follows that
    \begin{equation}\label{eq:ac}
        |\Tr(\sigma^x\rho'_S)-\Tr(\sigma^x\rho''_S)|\leq 2 \|\rho'_S-\rho''_S\|_1\leq4\epsilon(1+\epsilon),
    \end{equation}
and hence,
    \begin{equation}
        R^{2\epsilon(1+\epsilon)}_{\mathrm{CC}}(\sigma_C;\rho_S,r^\prime_z)\leq R_{\mathrm{CC}}(\sigma_C;\rho_S,r^\prime_z)+ 4\epsilon(1+\epsilon).
    \end{equation}
    Recalling Eq. (\ref{eq:2e+es}), we arrive at
    \begin{equation}
         R_{\mathrm{CC}}(\sigma^\epsilon_C;\sigma_S,r^\prime_z) \leq R_{\mathrm{CC}}(\sigma_C;\rho_S,r^\prime_z)+4\epsilon(1+\epsilon).
    \end{equation}
    This completes the proof.
\end{proof}

The continuity of $R_{\mathrm{CC}}(\sigma_C;\rho_S,r^\prime)$ allows us to numerically calculate $R_{\mathrm{CC}}^{(d)}(\rho_S,r^\prime_z)$ [which is the upper bound of $R_{\mathrm{CC}}(\sigma_C^{(d)};\rho_S,r^\prime)$] with a small error, by  uniform sampling of $d$-dimensional catalytic states. 
    For $d=2$, because states which are rotational symmetric about the $z$ axis are equivalent under covariant unitaries, it is sufficient to sample within the $xz$ plane with $x\geq0$. 
    When $d>2$, due to the enormous sampling cost, we can only obtain the local optimal by the gradient descent method, 
    though this continuity property can accelerate the initial sampling process of this optimization task. The technical details of solving $\cccone^{(d)}$ are shown in the following.

    For solving $\cccone^{(2)}$, we uniformly sample on the $xz$ plane with $x\geq 0$, as mentioned before. In practice, we sample on a two-dimensional lattice with constant $1/256$ (i.e., the size of the lattice cell along both the $x$ and the $z$ axis is set to be $1/256$).
    Thus, in a lattice cell, the maximal distance of unsampled points from the sampled point is $2\epsilon =1/(256\sqrt{2})$. Then, according to Eq. (\ref{eq:repsilon}), we obtain the Lipschitz error of the upper bound of $\cccone^{(2)}$ as less than $4\epsilon(1+\epsilon)\approx 5.553\times 10^{-3}$.

    To obtain $\cccone^{(3)}$, we use the gradient descent optimization to find the local maximal of $R_\mathrm{CC}(\sigma^{(3)}_C;\rho_S,r^\prime_z)$. 
    The initial points for the gradient descent are chosen via the LIPO algorithm \cite{Malherbe2017}, which allows us to use the Lipschitz condition to accelerate the searching process.
    Note that we use the Hilbert–Schmidt ensemble generating method \cite{generateRD} as the sub-task of LIPO to randomly sample the catalysts.
    Then we parametrize every three-dimensional catalysts using eight Gell-Mann matrices \cite{PhysRev.125.1067}, such that we can calculate the approximate gradient
    in $\mathbb{R}^8$. The termination tolerance of the function is set at $1\times 10^{-8}$. 
    Note that this strategy can also be applied to obtain $\cccone^{(d)}$ with $d>3$.
    
\bibliography{apssamp}

\end{document}